\newcommand{\mypara}[1]{\smallskip\noindent{\textbf{\sffamily #1} \ }}
\newcommand{\nats}{{\mathbb{N}}}
\newcommand{\cO}{{\cal O}}
\newcommand{\Yes}{\textsf{yes}}
\newcommand{\cF}{{\mathcal{F}}}
\newcommand{\cR}{\mathcal{R}}
\newcommand{\cA}{\mathcal{A}}
\newcommand{\cH}{\mathcal{H}}
\newcommand{\cT}{\mathcal{T}}
\newcommand{\cM}{\mathcal{M}}
\newcommand{\cN}{\mathcal{N}}
\newcommand{\OPT}{\mbox{\rm OPT}}
\newcommand{\eps}{\varepsilon}
\newcommand{\cJ}{{\mathcal J}}
\newcommand{\shortversion}[1]{}
\newcommand{\CBT}{{{\textsf{CON-}}{\eta}{\textsf{-TDS}}}}
\newcommand{\CVC}{{\sc Connected Vertex Cover}}
\newcommand{\OO}{{\mathcal O}}
\newcommand{\bigoh}{{\mathcal O}}
\newcommand{\nka}{${\sf NP \subseteq coNP/poly}$}
\newcommand{\nn}{{\mathbb N}}
\newcommand{\rr}{{\mathbb R}}
\newenvironment{tightcenter}
 {\parskip=0pt\par\nopagebreak\centering}
 {\par\noindent\ignorespacesafterend}
\newlength{\RoundedBoxWidth}
\newsavebox{\GrayRoundedBox}
   {\setlength{\RoundedBoxWidth}{\textwidth-4.5ex}
    \def\boxheading{#1}
    \begin{lrbox}{\GrayRoundedBox}
       \begin{minipage}{\RoundedBoxWidth}%
   }{%
       \end{minipage}
    \end{lrbox}%
    \begin{tightcenter}%
    \begin{tikzpicture}%
       \node(Text)[draw=black!60,fill=white,rounded corners,%
             inner sep=2ex,text width=\RoundedBoxWidth]%
             {\usebox{\GrayRoundedBox}};
        \coordinate(x) at (current bounding box.north west);
        \node [draw=white,rectangle,inner sep=3pt,anchor=north west,fill=white] 
        at ($(x)+(6pt,.75em)$) {\boxheading};
    \end{tikzpicture}
    \end{tightcenter}\vspace{0pt}%
    \ignorespacesafterend
}    
\newcommand{\defparproblem}[4]{
 \vspace{1mm}
\noindent\fbox{
  \begin{minipage}{.95\textwidth}
  \begin{tabular*}{\textwidth}{@{\extracolsep{\fill}}lr} \textsc{#1} \\ \end{tabular*}
  {\bf{Input:}} #2  \\
  {\bf{Parameter:}} #3\\
  {\bf{Question:}} #4
  \end{minipage}
  }
  \vspace{1mm}
}
\newcommand{\sv}[1]{}
\newcommand{\CBTDS}{{\sc Connected $\eta$-Treedepth Deletion}}
\newcommand{\CetaTran}{{\sc Con-$\eta$-Depth-Transversal}}
\newcommand{\cbtds}{{connected $\eta$-treedepth deletion set}}
\newcommand{\btds}{{$\eta$-treedepth deletion set}}
\newcommand{\td}{{\sf td}}
\newcommand{\uclos}{{\sf UClos}}
\newcommand{\ceil}[1]{\lceil #1 \rceil}
\newcommand{\floor}[1]{\lfloor #1 \rfloor}
\newcommand{\nice}{nice}
\newcommand*{\components}[1]{\ensuremath{\operatorname{Comp}({#1})}}
\newcommand*{\componentsTD}[2]{\ensuremath{\operatorname{Comp}({#1}, {#2})}}
\newenvironment{claimproof}[1][\proofname]{\proof[#1]}{\endproof}
\newtheorem{definition}{\bf Definition}
\newtheorem{proposition}{\bf Proposition}
\newtheorem{claim}{\bf Claim}
\newtheorem{observation}{\bf Observation}
\newtheorem{reduction rule}{\bf Reduction Rule}
\newtheorem{lemma}{\bf Lemma}
\title{An Improved Time-Efficient Approximate Kernelization for Connected Treedepth Deletion Set \footnote{A preliminary version \cite{EibenMR22} of this paper has appeared in proceedings of WG 2022.}}
\author[1]{Eduard Eiben}
\author[2]{Diptapriyo Majumdar}
\author[3]{M. S. Ramanujan}
\affil[1]{Royal Holloway, University of London, Egham, United Kingdom
    \texttt{eduard.eiben@rhul.ac.uk}}
\affil[2]{Indraprastha Institute of Information Technology Delhi, New Delhi, India
\texttt{diptapriyo@iiitd.ac.in}}
\affil[3]{University of Warwick, Coventry, United Kingdom
\texttt{R.Maadapuzhi-Sridharan@warwick.ac.uk}}
\date{}
\begin{document}

\maketitle


\begin{abstract}
We study the {\CBTDS} problem, where the input instance is an undirected graph $G$, and an integer $k$ and the objective is to decide whether there is a vertex set $S \subseteq V(G)$ such that $|S| \leq k$, every connected component of $G - S$ has treedepth at most $\eta$ and $G[S]$ is a connected graph.
As this problem naturally generalizes the well-studied {\CVC} problem, when parameterized by the solution size $k$, {\CBTDS} is known to not admit a polynomial kernel unless {\nka}. This motivates the question of designing {\em approximate polynomial kernels} for this problem. 

In this paper, we show that for every fixed $0 < \eps \leq 1$, {\CBTDS} admits a time-efficient $(1+\eps)$-approximate kernel of size $k^{2^{\OO(\eta + 1/\eps)}}$  
 (i.e., a Polynomial-size Approximate Kernelization Scheme).
\end{abstract}
\section{Introduction}

Parameterized complexity is a popular approach to cope with NP-Completeness and the related area of kernelization studies mathematical formulations of preprocessing algorithms for (typically) NP-complete decision problems.
Kernelization is an important step that preprocesses the input instance $(I, k)$ into a smaller,  equivalent instance $(I', k')$ in polynomial-time such that $|I|' + k'$ is bounded by $g(k)$.
It is desired that $g(k)$ is polynomial in $k$, in which case we have a {\em polynomial kernelization}.
Over the past few decades, the design of (polynomial) kernelization for numerous problems has been explored~\cite{CyganFKLMPPS15, Cygan12, CyganPPW13, MisraPRS14} and a rich variety of algorithm design techniques have been introduced.
There are, however, problems that provably do not admit polynomial kernels unless {\nka} \cite{BodlaenderJK14, DomLS14, HermelinKSWW15, HermelinW12}, in which case one requires an alternate rigorous notion of preprocessing.
Moreover, the notion of kernelization is defined with respect to decision problems, implying that when a suboptimal solution to the reduced instance is provided, one may not be able to get a feasible solution to the original input instance.
To address both of the aforementioned issues with kernelization, Lokshtanov et al.~\cite{LokshtanovPRS17} introduced the framework of Approximate Kernelization.
Roughly speaking, an $\alpha$-approximate kernelization is a polynomial-time preprocessing algorithm for a parameterized optimization problem with the promise that if a $c$-approximate solution to the reduced instance is given, then a $(c\cdot\alpha)$-approximate feasible solution to the original instance can be obtained in polynomial time. In this case, both $c, \alpha \geq 1$. When the reduced instance has size bounded by $g(k)$ for some polynomial function $g$, then we have a $\alpha$-approximate polynomial-size approximate kernel (see Section \ref{sec:prelims} for formal definitions).

In recent years, there has been a sustained search for polynomial-size approximate kernels for well-known problems in parameterized complexity that are known to exclude standard polynomial kernelizations. One such set of problems is the family of ``vertex deletion'' problems with a {\em connectivity constraint}.
A classic example here is {\sc Vertex Cover} that admits a $2k$ vertex kernel, but {\CVC} provably does not admit a polynomial kernel unless {\nka}.
Lokshtanov et al.~\cite{LokshtanovPRS17} proved that for every $\eps > 0$, {\CVC} admits a $(1+\eps)$-approximate kernel of size $\cO(k^{\ceil{1/\eps}})$. This is also called a {\em Polynomial-size Approximate Kernelization Scheme} (PSAKS). Subsequent efforts have mainly focused on studying the feasibility of approximate kernelization for problems that generalize {\CVC}. 
For instance, Eiben et al.~\cite{EibenHR19} obtained a PSAKS for the {\sc Connected $\cH$-Hitting Set} problem (where one wants to find a smallest connected vertex set that hits all occurrences of graphs from the finite set $\cH$ as induced subgraphs) and Ramanujan obtained a PSAKS for {\sc Connected Feedback Vertex Set}~\cite{Ramanujan19} and  a $(2+\eps)$-approximate polynomial {\em compression}~\cite{Ramanujan21} for the {\sc Planar $\cF$-deletion} problem~\cite{FominLMS12} with connectivity constraints on the solution. A compression is a weaker notion than kernelization, where the output is not required to be an instance of the original problem at hand. 

In this paper, our focus is on the connectivity constrained version of the {\sc $\eta$-Treedepth Deletion Set} problem. In the (unconnected version of the) problem, one is given a graph $G$ and an integer $k$ and the goal is to decide whether there is a vertex set of size at most $k$ whose deletion leaves a graph of treedepth at most $\eta$.  We refer the reader to Section~\ref{sec:prelims} for the formal definition of treedepth. Intuitively, it is a graph-width measure that expresses the least number of rounds required to obtain an edge-less graph, where, in each round we delete some vertex from each surviving connected component. Treedepth is a graph parameter that has attracted significant interest in the last decade. It allows improved algorithmic bounds over the better-known parameter of treewidth for many problems (see, for example, \cite{ReidlRVS14,HegerfeldK20}) and it plays a crucial role in the study of kernelization~\cite{GajarskyHOORRVS17}. In recent years, the optimal solution to the {\sc $\eta$-Treedepth Deletion Set} problem itself has been identified as a useful parameter in the kernelization of generic vertex-deletion problems~\cite{JansenP20}.  The many insightful advances made by focusing on graphs of bounded treedepth motivates us to consider the {\CBTDS} problem as an ideal conduit between the well-understood {\CVC} problem and the connected versions of more general problems such as the {\sc $\eta$-Treewidth Deletion Set} problem, which is still largely unexplored from the point of view of approximate kernelization. 
We formally state our problem as follows.

\defparproblem{{\CBTDS} (\CetaTran)}
	{An undirected graph $G$, and an integer $k$.}{$k$}{Does $G$ have a set $S$ of at most $k$ vertices such that $G[S]$ is connected and $(G - S)$ has treedepth at most $\eta$?}

 A set $S \subseteq V(G)$ is called a {\em {\cbtds}} if $G[S]$ is connected and every connected component of $G - S$ has treedepth at most $\eta$.
As edgeless graphs have treedepth 1, it follows that {\CBTDS} generalizes {\CVC} and does not have a polynomial kernelization under standard hypotheses even for constant values of $\eta$, thus motivating its study through the lens of approximate kernelization. 
Here, two results in the literature are of particular consequence to us and form the starting point of our work:

\begin{itemize}
	\item  Graphs of treedepth at most $\eta$ can be characterized by a finite set of forbidden induced subgraphs, where each obstruction has size at most $2^{2^{\eta - 1}}$~\cite{DvorakGT12} and hence, an invocation of the result of Eiben et al.~\cite{EibenHR19} gives a $(1+\eps)$-approximate kernelization of size $\cO(k^{2^{2^{\eta -1}}\cdot2^{\frac{1}{\eps}}+1})$ for {\CetaTran}.
	\item On the other hand, using the fact that graphs of treedepth at most $\eta$ also exclude a finite set of graphs as forbidden minors including at least one planar graph, we infer that the $(2+\eps)$-approximate polynomial compression for {\sc Connected Planar $\cF$-Deletion} of Ramanujan~\cite{Ramanujan21} implies a $(2+\eps)$-approximate compression for {\CBTDS} of size $k^{f(\eta)\cdot 2^{\OO(1/\eps)}}$ for some function $f$ that is at least exponential.
\end{itemize}

Naturally, these two ``meta-results'' provide useful proofs of concept using which we can conclude the existence of an approximate kernel (or compression) for {\CBTDS}. However, the kernel-size bounds that one could hope for by taking this approach are far from optimal and in fact, the second result mentioned above only guarantees the weaker notion of compression.  Thus, these two results  raise the following natural question: ``Could one exploit structure inherent to the bounded treedepth graphs and improve upon both results, by obtaining a $(1+\eps)$-approximate polynomial kernelization for {\CBTDS} with improved size bounds?"
Our main result is a positive answer to this question. 

\begin{restatable}{theorem}{thmpsaksCBTDSmain}\label{thm:psaks-CBTDS-main}
For every fixed $0<\eps \leq 1$, {\CBTDS} has a time-efficient $(1+\eps)$-approximate kernelization of size $k^{\OO (\ceil{{2^{\eta + \ceil{10/\eps}}\eta}/{\eps}})}$.
\end{restatable}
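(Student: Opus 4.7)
The approach is to adapt the representative-set technique behind the {\CVC} PSAKS of Lokshtanov et al.\ and its generalization to {\chsfullfull} by Eiben et al., but to work directly with the elimination-forest structure of treedepth-bounded graphs rather than invoke the generic forbidden-induced-subgraph characterization. Doing so is what lets us avoid the doubly-exponential dependence on $\eta$ that a black-box application of~\cite{EibenHR19} to the $2^{2^{\eta-1}}$-size obstructions of~\cite{DvorakGT12} would force.

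First, I would compute an initial (unconstrained) approximate $\eta$-treedepth deletion set $Y \subseteq V(G)$ of size $f(\eta) \cdot k$ via a known polynomial-time approximation for {\sc $\eta$-Treedepth Deletion}, and then upgrade $Y$ to a connected vertex set by attaching a Steiner-type connector of size $\mathcal{O}(k)$, bounded using the existence of a small optimal {\cbtds}. I would then fix an elimination forest $F$ of $G - Y$ of depth at most $\eta$, so that every $v \notin Y$ has a canonical ancestor path of length at most $\eta$ in $F$ and a well-defined neighborhood $N(v) \cap Y$.

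Second, I would classify the vertices of $G - Y$ by a recursively defined \emph{type}, computed bottom-up in $F$: the type of $v$ encodes (i) a truncation of $N(v) \cap Y$ capped to subsets of size $\mathcal{O}(1/\eps)$ (as in the classical $\binom{|Y|}{\le 1/\eps}$ trick for {\CVC}), and (ii) the multiset of types already assigned to the children of $v$ in $F$. A depth-$\eta$ recursion gives at most $|Y|^{2^{\mathcal{O}(\eta + 1/\eps)}}$ distinct types. The marking rule keeps, for each type, $\mathcal{O}(k^{1/\eps})$ representative subtrees and deletes the remaining vertices; combined with the bound on the number of types and on $|Y|$, this yields the claimed kernel size $k^{\mathcal{O}(\lceil 2^{\eta + \lceil 10/\eps \rceil} \eta / \eps \rceil)}$.

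Third, correctness hinges on a swap/rerouting argument. For the forward direction, given an optimal connected solution $S^*$ in $G$, I would show that for every portion of $S^*$ lying inside a subtree that is not retained, one can substitute an equivalent retained representative of the same type; by the $k^{1/\eps}$ threshold, pigeonhole guarantees such a representative exists unused, and the additive overhead for patching connectivity using the pre-computed Steiner skeleton is at most $\eps \cdot |S^*|$. For the reverse (lifting) direction, any $c$-approximate connected solution to the reduced instance is extended to a connected solution in $G$ by re-adding the connector skeleton and, for each type entirely vacated by the algorithm, a canonical ``covering patch'' of bounded size, yielding a $c(1+\eps)$-approximate connected {\cbtds} in $G$ in polynomial time.

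The hardest step will be calibrating the recursive type definition so that it is simultaneously (a) efficiently computable in polynomial time, (b) refined enough that the local exchange genuinely preserves the treedepth bound on the complement of the swapped solution, and (c) coarse enough that the count of types matches the exponent in the theorem. A secondary but non-trivial obstacle is reconciling the deletion-only nature of kernelization with the global connectivity constraint: this forces us to maintain a small connector skeleton through all reduction rules and to amortize the cost of reconnecting islands of the lifted solution against the $(1+\eps)$-slack afforded by the approximation guarantee.
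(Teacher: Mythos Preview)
Your plan diverges from the paper's and has a genuine gap exactly at the step you flag as hardest: the exchange argument. Your types are computed bottom--up in a \emph{fixed} elimination forest $F$ of $G-Y$, but correctness has to be argued relative to a depth-$\eta$ decomposition of $G-S^*$ for the unknown solution $S^*$. Two subtrees sharing a recursive type in $F$ need not be interchangeable with respect to $S^*$: what governs interchangeability is their neighborhood \emph{outside $S^*$} and where that neighborhood sits in a treedepth decomposition of $G-S^*$, and your type (truncated $N(\cdot)\cap Y$ plus children's types in $F$) encodes neither. The paper's substitute for this is its key Lemma~\ref{lemma:component-characteristics}: if at least $\eta+1$ connected sets with identical neighborhood $T$ outside a forced set $H$ and identical treedepth value avoid $S$, then $S$ can be stripped from \emph{all} such sets and remain an $\eta$-treedepth deletion set. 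Its proof reasons inside a decomposition of $G-S$, showing one of the untouched sets lies entirely below $T\setminus S$ so the rest can be grafted as siblings. That lemma is the missing idea in your proposal; without it neither your forward swap nor your lifting step goes through. In particular, in the lifting direction a deletion set $S'$ for $G'$ is not automatically one for $G$ (deleted subtrees can combine with $(Y\cup Z)\setminus S'$ into treedepth-$>\eta$ pieces of $G-S'$), and since you keep $k^{1/\eps}$ representatives per type no type is ever ``entirely vacated'', so your covering-patch clause is vacuous and does not repair this.

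Two further issues worth noting. First, your type is not finitary as stated: a vertex of $F$ can have unboundedly many children, so ``the multiset of children's types'' has unbounded support; making this work requires interleaving marking with typing level by level and capping multiplicities at a constant, and you should verify the exponent survives that. Second, the paper's connectivity step is not a single precomputed skeleton through $Y$. After marking a set $\cM$ of deletable vertices, it computes an optimal Steiner tree $T_L$ for \emph{every} subset $L\subseteq V(G)\setminus\cM$ of size at most $t=2^{\lceil 1/\delta\rceil}$ and retains all $T_L$ of size at most $(1+\delta)k$; this is precisely what guarantees (via Proposition~\ref{proposition:approx-k-steiner-tree}) that an optimal $t$-restricted Steiner tree for the surviving part $S'\setminus\cM$ of a nice solution lives inside $G'$, which is how the $(1+\delta)^2$ bound of Lemma~\ref{lemma:nice_connected_solution_in_reduced} is obtained. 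A single connector through $Y$ does not provide this, because after deletions the short paths realizing the optimal $t$-components may pass through $\cM$ and be gone.
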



\section{Preliminaries}
\label{sec:prelims}

\mypara{Sets and Graphs:}
We use $[r]$ to denote the set $\{1,\ldots,r\}$ and $A \uplus B$ to denote the disjoint union of two sets.
We use standard graph theoretic terminologies from Diestel's book~\cite{DiestelBook}.
Throughout the paper, we consider undirected graphs.
We use $P_{\ell}$ to denote a path with $\ell$ vertices.
A graph is said to be {\em connected} if there is a path between every pair of vertices.
Let $G = (V, E)$ be a graph and a pair of vertices $u, v \in V(G)$.
We call a set $A \subseteq V(G)$ an {\em $(u, v)$-vertex cut} if there is no path from $u$ to $v$ in $G - A$.
For $u, v \in G$, we use $dist(u, v)$ to denote the length of a `shortest path' from $u$ to $v$.
We use $diam(G) = \max_{u, v \in V(G)} dist(u, v)$ to denote the {\em diameter} of $G$.
Let $R \subseteq V(G)$ be a vertex set the elements of which are called {\em terminals} and a weight function $w: E(G) \rightarrow \nats$. 
A {\em Steiner tree} with terminal set $R$ is a subgraph $T$ of $G$ such that $T$ is a tree and $R \subseteq V(T)$. The \emph{weight} of a Steiner tree $T$ is $w(T) = \sum\limits_{e \in E(T)} w(e)$.
A {\em $t$-component} for $R$ is a tree with at most $t$ leaves and all these leaves coincide with a subset of $R$.
A {\em $t$-restricted Steiner tree} for $R$ is a collection $\cT$ of $t$-components for $R$ such that the union of the $t$-components in $\cT$ induces a Steiner tree for $R$. We refer to Byrka et al. \cite{ByrkaGRS13} for more detailed introduction on these terminologies.

\begin{proposition}[\cite{BorchersD97}]
	\label{proposition:approx-k-steiner-tree}
	For every $t \geq 1$, given a graph $G$, a terminal set $R$, a cost function $w: E(G) \rightarrow \nats$, and a Steiner tree $T$ for $R$, there exists a $t$-restricted Steiner tree $\cT$ for $R$ of cost at most $(1+ \frac{1}{\floor{\log_2 t}})\cdot w(T)$.
\end{proposition}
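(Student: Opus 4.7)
The plan is to prove the Borchers--Du inequality by reducing to a clean decomposition question on binary trees with weighted edges and then carrying out an amortized counting argument over the resulting decomposition.

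First I would normalize the given Steiner tree $T$ so that it has a convenient structure. Specifically, I would root $T$ at an arbitrary terminal and then reduce to the ``full'' case: suppress every degree-$2$ Steiner (non-terminal) vertex, and split every terminal of degree $d \ge 2$ into $d$ copies attached by zero-weight edges so that terminals appear only as leaves. After this preprocessing, $T$ decomposes naturally into \emph{full components} -- maximal subtrees whose internal vertices are Steiner and whose leaves are terminals -- and the total weight is preserved. Moreover, every full component can be assumed binary (internal degree exactly $3$, apart from the root) by the standard trick of inserting zero-weight edges at internal Steiner vertices of higher degree. After this step it suffices to show that each individual binary full component $F$ with $\ell$ leaves admits a decomposition into $t$-components whose total weight is at most $\bigl(1+\tfrac{1}{\lfloor \log_2 t \rfloor}\bigr)w(F)$.

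The heart of the proof is a recursive decomposition of $F$ into $t$-components. Let $s = \lfloor \log_2 t \rfloor$. I would repeatedly identify a deepest internal vertex $v$ of $F$ whose subtree $F_v$ contains between $s$ and $2s \le t$ leaves (such a $v$ exists by walking up from any leaf). I then peel off $F_v$ as one $t$-component, attach a ``virtual'' terminal at $v$ in the remaining tree, and recurse. Each peeling strictly reduces the number of leaves, so the process terminates, and by construction every piece is a $t$-component. The crucial accounting is that the edge on the path from $v$ to its parent is the only edge that will be paid for again in a later component; all internal edges of $F_v$ are paid exactly once.

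The main obstacle, and the step I would spend the most care on, is the amortized weight analysis to obtain the constant $\bigl(1+\tfrac{1}{s}\bigr)$. I would set this up by charging, to each $t$-component peeled off, the weight of its internal edges plus the weight of its one ``external'' edge back to the remaining tree, and then amortize the latter against the $\ge s$ leaves that this component absorbs. The point is that every edge $e$ of the original $F$ sits on the path from exactly one leaf to the root of some $t$-component, and it can be re-paid at most once per level of the recursive peeling; since the peeling reduces the leaf count by at least a factor of $\approx 2$ at each merge when projected to the ``contracted'' tree of virtual terminals, an edge is re-charged at most $\lceil \tfrac{1}{s} \rceil$ times relative to its original contribution. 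Summing over all edges and over all full components of $T$ would then yield the desired bound
\[
w(\mathcal{T}) \;\le\; \Bigl(1+\tfrac{1}{\lfloor \log_2 t \rfloor}\Bigr)\, w(T).
\]
Finally, I would note that the whole construction is constructive and runs in polynomial time for any fixed $t$, which matches the ``given'' phrasing of the proposition.
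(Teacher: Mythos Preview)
The paper does not prove this proposition; it is quoted from Borchers and Du and used as a black box, so there is no in-paper argument to compare your sketch against.

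Judging your sketch on its own, there is a genuine gap at the ``virtual terminal'' step. By the definition used here, a $t$-component for $R$ is a tree all of whose leaves lie in $R$; when you peel off $F_v$ and leave a placeholder at the Steiner vertex $v$, the piece of the remaining tree that contains $v$ has $v$ as a degree-one vertex, hence a leaf, but $v\notin R$, so that piece is not a $t$-component for $R$ and the resulting collection is not a $t$-restricted Steiner tree for $R$. Repairing this requires extending each such $v$ to an actual terminal---typically by duplicating a $v$-to-leaf path inside $F_v$---and it is precisely the weight of these duplicated paths that constitutes the overhead you must bound. Your accounting does not do this: the claim that ``an edge is re-charged at most $\lceil 1/s\rceil$ times'' yields $\lceil 1/s\rceil=1$ for every $s\ge 1$ and hence only a factor-$2$ bound, and peeling $s$ to $2s$ leaves at a time reduces the leaf count additively by $\Theta(s)$, not multiplicatively by~$2$, so the recursion you describe does not deliver $1+1/s$. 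The Borchers--Du argument obtains the constant by a different mechanism: it cuts the full binary component at all internal vertices whose depth is congruent to a fixed shift $j\in\{0,\dots,s-1\}$, attaches to every cut vertex a path down to a genuine terminal so that each piece is a bona fide full component with at most $2^s\le t$ terminal leaves, and then averages over the $s$ choices of $j$ to exhibit a shift whose total duplicated weight is at most $w(F)/s$. Your bottom-up greedy peeling gives no handle on such an averaging.
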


\begin{proposition}[\cite{DreyfusW71}]
	\label{proposition:steiner-tree-terminals}
	Let $G$ be a graph, $R$ be a set of terminals, and a $w: E(G) \rightarrow \nats$ be a cost function.
	Then, a minimum weight Steiner tree for $R$ can be computed in $\cO(3^{|R|}|V(G)||E(G)|)$-time.
\end{proposition}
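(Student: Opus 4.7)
The plan for Proposition~\ref{proposition:steiner-tree-terminals} is to establish the bound via the classical Dreyfus--Wagner dynamic programming scheme. I would introduce a table $D[S, v]$, indexed by a nonempty subset $S \subseteq R$ and a vertex $v \in V(G)$, defined as the minimum weight of a Steiner tree in $G$ whose vertex set contains $S \cup \{v\}$. The desired quantity is then $\min_{v \in V(G)} D[R \setminus \{r^*\}, v]$ for any fixed $r^* \in R$, and the tree itself is recovered by storing argmin pointers at each entry. As a preprocessing step, I would compute all-pairs shortest-path distances $\dist(u,v)$ in $(G,w)$ (for instance, by running Bellman--Ford from each source in $\cO(|V(G)|^2 |E(G)|)$ time), which fits well within the claimed budget.

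For the recurrence, the base case $|S|=1$ with $S=\{t\}$ gives $D[\{t\}, v] = \dist(t, v)$, since an optimum Steiner tree on $\{t, v\}$ is a shortest $t$--$v$ path. For $|S|\geq 2$ I would exploit the structural dichotomy: in any optimum tree $T$ realising $D[S, v]$, either (i) $v$ is a branching vertex and the components of $T-v$ induce a nontrivial bipartition $S' \sqcup (S \setminus S')$ of $S$, or (ii) $v$ has degree at most two in $T$, in which case one can slide along the maximal path of degree-at-most-$2$ vertices of $T$ through $v$ to reach a branching/leaf vertex $u$, charging the cost as $D[S, u] + \dist(u, v)$. This yields
\[
D[S, v] \;=\; \min\!\left(\min_{\emptyset \subsetneq S' \subsetneq S}\bigl(D[S', v] + D[S \setminus S', v]\bigr),\; \min_{u \in V(G)}\bigl(D[S, u] + \dist(u, v)\bigr)\right),
\]
processed in order of increasing $|S|$. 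Correctness is an induction on $|S|$: soundness is immediate because each term glues certified substructures into a tree covering $S \cup \{v\}$; completeness applies the dichotomy to an optimum tree, noting that in case (ii) the stripped path has weight at least $\dist(u,v)$, so minimising over $u$ recovers an optimum.

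For the running time I would bound the two halves of the recurrence separately. The splitting term, aggregated over all $S \subseteq R$ and $v \in V(G)$, costs $\sum_{S \subseteq R} 2^{|S|}\cdot |V(G)| = 3^{|R|}\cdot |V(G)|$ arithmetic operations, via the identity that each terminal independently lies in $S'$, in $S \setminus S'$, or outside $S$. The extension term, for each fixed $S$, can be recast as a single-source shortest-path problem on an auxiliary graph in which a super-source is joined to every $v$ by an edge whose weight equals the partial value of $D[S, v]$ obtained from the splitting step; running Bellman--Ford on this auxiliary graph costs $\cO(|V(G)||E(G)|)$ per subset, hence $\cO(2^{|R|}|V(G)||E(G)|)$ in total. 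Adding both contributions gives the claimed $\cO(3^{|R|}|V(G)||E(G)|)$ bound.

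The delicate step I expect is the justification of the extension case, namely showing that ranging $u$ over all of $V(G)$ together with the precomputed $\dist(u,v)$ is simultaneously sound and tight on optima. Soundness is clear because a $\dist(u,v)$-path can be glued to any tree certifying $D[S,u]$ at the shared vertex $u$. Tightness must handle the mild degeneracies where the shortest $u$--$v$ path overlaps the tree realising $D[S,u]$, or where $v$ already lies on that tree; I would argue that in each such case the resulting candidate value is no smaller than the true optimum, so the minimum computed by the recurrence coincides with the optimal Steiner-tree cost. Once this structural lemma is in place, the rest is the standard subset-convolution accounting that produces the $3^{|R|}$ factor.
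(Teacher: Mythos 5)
This proposition is cited in the paper from Dreyfus--Wagner~\cite{DreyfusW71} and is used as a black box; the paper supplies no proof of its own, so your task here is really to reconstruct the classical argument, and your sketch does that essentially correctly: the table $D[S,v]$ over $S\subseteq R$ and $v\in V(G)$, the split/extend dichotomy, the two-phase recurrence resolved by a Bellman--Ford-style relaxation per subset, and the $\sum_{S\subseteq R}2^{|S|}=3^{|R|}$ accounting are all exactly the standard Dreyfus--Wagner scheme.

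One technical wrinkle you should fix. You propose to preprocess all-pairs shortest paths by running Bellman--Ford from every source, at cost $\cO(|V(G)|^2|E(G)|)$, and claim this ``fits well within the claimed budget.'' It does not in general: $|V(G)|^2|E(G)|\le 3^{|R|}|V(G)||E(G)|$ only when $|V(G)|\le 3^{|R|}$, which fails whenever the terminal set is small relative to the graph. Fortunately APSP is not needed. The base case $D[\{t\},v]=\dist(t,v)$ only requires single-source shortest paths from each of the $|R|$ terminals, costing $\cO(|R|\cdot|V(G)||E(G)|)\le \cO(3^{|R|}|V(G)||E(G)|)$; and the extension step, as you yourself observe, can be implemented directly as a single-source shortest-path computation on the auxiliary graph with a super-source weighted by the split-phase values, so the distances $\dist(u,v)$ for arbitrary $u,v$ never need to be materialised. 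With that replacement the total is $\cO(3^{|R|}|V(G)| + 2^{|R|}|V(G)||E(G)| + |R|\,|V(G)||E(G)|)=\cO(3^{|R|}|V(G)||E(G)|)$ as claimed. A second, cosmetic point: writing $D[S,u]$ on the right-hand side of the recurrence for the same $S$ makes it look circular; it is cleaner to split explicitly into an intermediate $g[S,v]=\min_{\emptyset\subsetneq S'\subsetneq S}(D[S',v]+D[S\setminus S',v])$ followed by $D[S,v]=\min_u(g[S,u]+\dist(u,v))$, which is precisely the relaxation your auxiliary-graph construction implements.
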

Note that if \(|R|\) is constant then the above algorithm runs in polynomial time.

\mypara{Treedepth:} Given a graph $G$, we define ${\td}(G)$, the {\em treedepth} of $G$ as follows.

\begin{equation}
\label{eq:treedepth-definition}
	{\td} (G) =
	\begin{cases}
		1 & \mbox{ if }|V(G)| = 1\\
		1 + \min\limits_{v \in V(G)} {\td}(G - v) & \mbox{ if } G \mbox{ is connected and }|V(G)| > 1 \\
		\max\limits_{i = 1}^p {\td}(G_i) & \mbox{ if }G_1,\ldots,G_p \mbox{ are connected components of }G
	\end{cases}
\end{equation}

A \emph{treedepth decomposition} of graph $G = (V, E)$ is a rooted forest $Y$ with vertex set
$V$, such that for each edge $uv \in E(G)$, we have either that $u$ is an ancestor of $v$ or $v$ is an
ancestor of $u$ in $Y$.
Note that a treedepth decomposition of a connected graph $G$ is equivalent to some depth-first search tree of $G$ and in the context of treedepth is also sometimes referred to as {\em elimination tree} of $G$.
It is clear from the definition that the treedepth of a graph $G$ is equivalent to the minimum depth of a treedepth decomposition of $G$, where depth is defined as the maximum number of vertices along a path from the root of the tree to a leaf \cite{NesetrilM06}.
Let $T$ be a tree rooted at a node $r$.
The {\em upward closure} for a set of nodes $S \subseteq V(T)$ is denoted by $\uclos_T (S) = \{v \in V(T)\mid v$ is an ancestor of $u\in S$ in $T\}$. This notion has proved useful in the kernelization algorithm of Giannopoulou et al.~\cite{GiannopoulouJLS17} for \(\eta\)-\textsc{Treedepth Deletion}.  


The following facts about the treedepth of a graph will be useful throughout the paper. 

\begin{proposition}[\cite{NesetrilM06}]
	\label{lemma:bounded-treedepth-diameter}
	Let $G$ be a graph such that $\td (G) \leq \eta$.
	Then, the diameter of $G$ is at most $2^{\eta}$.
\end{proposition}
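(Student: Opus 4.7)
The plan is to proceed by induction on $\eta$, strengthening the inductive hypothesis to assert that any connected graph $G$ with $\td(G) \le \eta$ satisfies $\mathrm{diam}(G) \le 2^\eta - 2$. This stronger statement trivially implies the proposition, and it is what makes the induction go through cleanly, because routing a path through a deleted vertex incurs an additive overhead of $+2$ that would otherwise spoil the target bound. For the base case $\eta = 1$, the definition of treedepth in \eqref{eq:treedepth-definition} forces $|V(G)| = 1$, and so $\mathrm{diam}(G) = 0 = 2^1 - 2$.

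For the inductive step, let $G$ be a connected graph with $\td(G) \le \eta$ and assume the claim for all connected graphs of treedepth at most $\eta - 1$. By the recursive definition of treedepth I can fix a vertex $v \in V(G)$ with $\td(G - v) \le \eta - 1$. Each connected component $C$ of $G - v$ then has treedepth at most $\eta - 1$, so the inductive hypothesis yields $\mathrm{diam}(C) \le 2^{\eta - 1} - 2$. Moreover, since $G$ is connected, every component of $G - v$ must contain at least one neighbor of $v$ in $G$; otherwise that component would already be disconnected from $v$ in $G$, contradicting connectedness of $G$.

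It then suffices to bound $\dist_G(u, w)$ for arbitrary $u, w \in V(G)$. If $u$ and $w$ lie in the same component of $G - v$, then immediately $\dist_G(u, w) \le 2^{\eta - 1} - 2 \le 2^\eta - 2$. Otherwise, I take $u', w'$ to be neighbors of $v$ inside the components containing $u$ and $w$ respectively (omitting the leg to $u'$ if $u = v$, and analogously for $w$), which gives $\dist_G(u, v) \le \dist(u, u') + 1 \le 2^{\eta - 1} - 1$ and symmetrically $\dist_G(v, w) \le 2^{\eta - 1} - 1$. The triangle inequality then yields $\dist_G(u, w) \le 2(2^{\eta - 1} - 1) = 2^\eta - 2$, completing the induction.

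I do not foresee any genuine obstacle; the main subtlety is precisely the choice of the sharper invariant $2^\eta - 2$ rather than $2^\eta$. Attempting the induction with the face-value bound $2^\eta$ allows routing through $v$ to contribute an extra $+2$ in the cross-component case, yielding only $\mathrm{diam}(G) \le 2^\eta + 2$, which is too weak. Everything else is a routine case analysis.
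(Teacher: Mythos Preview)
Your proof is correct; the strengthened invariant $\mathrm{diam}(G)\le 2^\eta-2$ is exactly what is needed to absorb the additive $+2$ when routing through the deleted vertex, and your case analysis handles all pairs $u,w$ (including $u=v$ or $w=v$) cleanly.

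There is nothing to compare against, however: the paper does not prove this proposition at all. It is stated as a known fact with a citation to Ne\v{s}et\v{r}il and Ossona de~Mendez and then used as a black box throughout Section~\ref{sec:approxKernel}. Your argument is the standard self-contained proof of this folklore bound, and in fact yields the slightly sharper $2^\eta-2$; the paper only ever needs the weaker $2^\eta$.
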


\begin{proposition}[\cite{FominLMS12}]
\label{prop:approximation}
	For every constant \(\eta\in \mathbb{N}\), there exists a polynomial-time \(\cO(1)\)-approximation for \(\eta\)-\textsc{Treedepth Deletion}.
\end{proposition}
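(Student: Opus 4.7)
The plan is to bypass the heavy machinery of Planar-\(\cF\)-Deletion by exploiting the finite, bounded-size \emph{induced subgraph} characterization of the class of graphs of treedepth at most \(\eta\); this reduces \(\eta\)-\textsc{Treedepth Deletion} to a hitting-set problem over constant-size forbidden patterns, for which a naive greedy algorithm already achieves a constant-factor approximation.

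First I would invoke the result of Dvo\v{r}\'ak, Giannopoulou, and Thilikos~\cite{DvorakGT12}, already quoted in the introduction: there is a finite family \(\cH_\eta\) of graphs, each on at most \(d := 2^{2^{\eta - 1}}\) vertices, such that \(\td(G) \leq \eta\) if and only if \(G\) contains no \(H \in \cH_\eta\) as an induced subgraph. For fixed \(\eta\), both \(|\cH_\eta|\) and \(d\) are constants depending only on \(\eta\).

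Next I would analyze the following greedy procedure. Initialize \(S := \emptyset\); while \(G - S\) still contains some \(H \in \cH_\eta\) as an induced subgraph, add \(V(H)\) to \(S\) and repeat. The test ``does \(G - S\) contain an induced copy of some \(H \in \cH_\eta\)?'' can be resolved in \(|V(G)|^{d} \cdot \cO(d^{2} \cdot |\cH_\eta|)\) time by brute-force enumeration of ordered \(d\)-tuples of vertices and pattern-matching against each \(H \in \cH_\eta\); this is polynomial for fixed \(\eta\). The main loop strictly decreases \(|V(G) \setminus S|\) at each iteration, so it terminates after polynomially many rounds, and by the characterization the returned \(S\) satisfies \(\td(G - S) \leq \eta\).

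Finally I would bound the approximation ratio by a standard charging argument. The induced copies of obstructions selected across iterations are pairwise vertex-disjoint by construction, and any optimal \(\eta\)-treedepth deletion set \(S^{*}\) must contain at least one vertex of each such copy (otherwise that copy would survive in \(G - S^{*}\), forcing \(\td(G - S^{*}) > \eta\) and contradicting optimality). Hence \(|S| \leq d \cdot |S^{*}| = 2^{2^{\eta - 1}} \cdot |S^{*}|\), which is a constant for fixed \(\eta\) and yields the required polynomial-time \(\cO(1)\)-approximation. I do not expect any significant obstacle in this plan: the whole argument is driven by the cited finite-obstruction theorem, and the only subtlety to flag is that the pattern-matching step runs in time \(|V(G)|^{\cO(d)}\), which is polynomial precisely because \(\eta\) (hence \(d\)) is constant — exactly matching the hypothesis of the proposition.
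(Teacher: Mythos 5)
Your proof is correct, and it takes a genuinely different and considerably more elementary route than the paper does. The paper proves nothing here: it simply cites the constant-factor approximation for \textsc{Planar $\cF$-Deletion} of Fomin, Lokshtanov, Misra and Saurabh~\cite{FominLMS12}, which applies because graphs of treedepth at most $\eta$ form a minor-closed class excluding a planar graph (a sufficiently long path). That machinery rests on LP relaxations, protrusion techniques and a general structure theory, and has the virtue of being far more general than treedepth. You instead exploit the finite \emph{induced-subgraph} obstruction set $\cH_\eta$ of Dvo\v{r}\'ak, Giannopoulou and Thilikos~\cite{DvorakGT12} (already quoted in the paper's introduction for a different purpose) and run the textbook greedy ``pick a vertex-disjoint obstruction, delete all its vertices, repeat'' algorithm. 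The charging argument is sound: the selected copies are pairwise disjoint induced obstructions in $G$, each has treedepth exceeding $\eta$, so any feasible $S^*$ must meet every one of them, giving $|S|\le d\cdot|S^*|$ with $d=2^{2^{\eta-1}}$. Your approach is self-contained and makes the approximation constant fully explicit, at the cost of being specific to treedepth (it would not extend, say, to \textsc{$\eta$-Treewidth Deletion}, where no bounded-size induced-subgraph obstruction set exists); the paper's citation buys generality and reuse of an off-the-shelf theorem at the cost of invoking much heavier tools. Since the paper only ever uses the bound $|X|=\cO(k)$ with an $\eta$-dependent constant, either route suffices for the kernelization that follows.
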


\begin{proposition}[\cite{ReidlRVS14}]\label{proposition:computing_elimination_tree}
	Let $G$ be a connected graph and \(\eta\in \mathbb{N}\). There exists an algorithm running in \(\bigoh(f(\eta)|V(G)|)\)-time, for some computable function \(f\), that either correctly concludes that \(\td(G) > \eta\) or computes a treedepth decomposition for \(G\) of depth at most \(\eta\). 
\end{proposition}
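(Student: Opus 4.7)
The plan is to combine the standard path-exclusion characterization of bounded treedepth with a dynamic program over a depth-first search tree. First, I would note that any graph with $\td(G)\leq \eta$ excludes any path on more than $2^\eta$ vertices as a subgraph: by Proposition~\ref{lemma:bounded-treedepth-diameter} (or directly by induction on the recursive definition~\eqref{eq:treedepth-definition}), the diameter, and hence the number of vertices on any path, of a connected graph of treedepth $\eta$ is bounded by a function of $\eta$ alone. This yields a cheap no-instance certificate whenever a DFS tree turns out to be too deep.

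Concretely, the algorithm begins by computing a DFS tree $T$ of $G$ rooted at an arbitrary vertex in $\bigoh(|V(G)|+|E(G)|)$ time. Since $G$ is connected and $T$ is a DFS tree, every non-tree edge of $G$ is a back-edge joining some vertex to one of its ancestors in $T$; in particular $T$ is itself an elimination tree of $G$. If $T$ has depth strictly greater than $2^\eta$, then a root-to-leaf branch of $T$ yields a path in $G$ on more than $2^\eta$ vertices, so I safely return $\td(G)>\eta$. Otherwise every vertex has at most $2^\eta$ ancestors in $T$, so the ancestor-interaction pattern at each vertex lives in a universe whose size depends only on $\eta$.

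With this structural guarantee in hand, I would run a bottom-up dynamic program on $T$. At a node $v$ with subtree $T_v$, the state records (i) which subset of the at-most-$2^\eta$ ancestors of $v$ in $T$ are ``used'' as ancestors of $v$ in the partial treedepth decomposition being built for $G[V(T_v)]$, and (ii) the remaining depth budget, an integer in $\{0,\ldots,\eta\}$. Thus each node carries at most $2^{2^\eta}\cdot(\eta+1)$ states, i.e., $f(\eta)$ many. Combining the tables of $v$'s children and enforcing the constraints imposed by back-edges incident to $v$ is a purely local operation taking $f(\eta)$ time per node. If the table at the root admits a realizable state with depth budget at least zero, a second top-down pass reconstructs an actual depth-$\leq\eta$ treedepth decomposition; otherwise the algorithm reports $\td(G)>\eta$.

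The main obstacle is keeping the dependence on $|V(G)|$ linear. A naive implementation could charge each back-edge once per ancestor it crosses, inflating the total cost by an extra factor of $2^\eta$ per edge. The fix is to account each back-edge only at its deeper endpoint and to let the ancestor-subset component of the state symbolically remember the shallow endpoint, so that the amortized cost per edge stays bounded by $f(\eta)$. A secondary subtlety is arguing that the ``used-ancestors plus depth budget'' signature is a sound interface: two partial decompositions of $G[V(T_v)]$ agreeing on this signature must be freely interchangeable inside any hypothetical global decomposition, which is what justifies locality of the DP and guarantees that the optimum is discovered.
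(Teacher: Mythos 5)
Your reduction to a dynamic program on a shallow DFS tree is a reasonable and genuinely different route from the cited result (Reidl et al.\ compute a tree decomposition of width at most $\eta-1$ via Bodlaender's algorithm and run a carefully optimized DP on \emph{that}, rather than on a DFS tree of depth up to $2^\eta$), but as written the argument has a real gap in the DP state design, and a smaller slip in the preliminary depth bound.

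The smaller slip first: Proposition~\ref{lemma:bounded-treedepth-diameter} bounds the \emph{diameter}, i.e.\ shortest-path distances, and bounded diameter does not in general bound the length of the longest path, which is what controls DFS depth. What you actually need is that $\td(P_n)=\lceil\log_2(n+1)\rceil$, so a graph of treedepth at most $\eta$ contains no path on $2^\eta$ vertices as a subgraph; this is the ``directly by induction'' argument you allude to, and it is that argument, not the diameter proposition, that justifies rejecting when the DFS tree exceeds depth $2^\eta-1$.

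The main gap is that the proposed state, ``which subset of $v$'s DFS-ancestors are used as ancestors of $v$ in the partial decomposition, together with a single remaining depth budget,'' is not a sound interface, and your own closing sanity check (``two partial decompositions agreeing on this signature must be freely interchangeable'') is exactly what fails. The DFS ancestors of $v$ form the adhesion through which $V(T_v)$ attaches to the rest of $G$, and in a global elimination forest each such ancestor $a$ occupies a \emph{specific depth} and stands in \emph{specific ancestor--descendant relations} with the other boundary vertices. When the tables of two children $v_1,v_2$ of $v$ are merged, a child's table entry implicitly commits the shared ancestors to particular depths and a particular chain order; two children whose entries agree on the ``used'' subset and on a single scalar budget may nevertheless impose contradictory depths or orderings on the same ancestor (for instance, $v_1$ forcing $a_1$ strictly above $a_2$ while $v_2$ forces the reverse, or the two children placing $a_1$ at different depths). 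A correct state therefore has to record, at minimum, the restriction of the partial elimination forest to the bag $\{v\}\cup\mathrm{anc}_T(v)$ together with the depth of each of these vertices --- essentially a labeled forest on the bag --- and the soundness argument has to show that \emph{this} richer signature is interchangeable. That is precisely the technical core of the ReidlRVS14 DP (their ``characteristics'' of partial elimination trees), and it is the part your sketch currently elides. With bags of size up to $2^\eta$ the resulting state space is still a function of $\eta$ alone, so the overall $\bigoh(f(\eta)|V(G)|)$ shape could survive, but the dependence on $\eta$ would be considerably worse than in the cited result, and more importantly the correctness of the merge step has to be established against the richer state, not the subset-plus-scalar one you describe.
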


\mypara{Parameterized algorithms and kernels:} A parameterized problem $\Pi$ is a subset of $\Sigma^* \times \nn$ for a finite alphabet $\Sigma$.
An instance of a parameterized problem is a pair $(x, k)$ where $x \in \Sigma^*$ is the input and $k \in \nn$ is the parameter.
We assume without loss of generality that $k$ is given in unary.
We say that $\Pi$ admits a {\em kernelization} if there exists a polynomial-time algorithm that, given an instance $(x, k)$ of $\Pi$, outputs an equivalent instance $(x', k')$ of $\Pi$ such that $|x'| + k' \leq g(k)$.
If $g(k)$ is $k^{\cO(1)}$, then we say that $\Pi$ admits a {\em polynomial kernelization}.

\mypara{Parameterized optimization problem and approximate kernels:}
\begin{definition}
\label{defn:para-opt-problem}
A {\em parameterized optimization problem} is a computable function $\Pi: \Sigma^* \times \nn \times \Sigma^* \rightarrow \rr \cup \{\pm \infty\}$.
\end{definition}

The {\em instances} of a parameterized problem are pairs $(x, k) \in \Sigma^* \times \nn$, and a {\em solution} to $(x, k)$ is simply $s \in \Sigma^*$ such that $|s| \leq |x| + k$.
The {\em value} of a solution $s$ is $\Pi(x, k, s)$.
Since the problems we deal with here are minimization problems, we state some of the definitions only in terms of minimization problems (for maximization problems, the definition would be analogous).
As an illustrative example, we provide the definition of the parameterized optimization version of {\CBTDS} problem as follows. This is a minimization problem that is a function ${\CBT}: \Sigma^* \times \nn \times \Sigma^* \rightarrow \rr \cup \{\pm \infty\}$ as follows.

\sloppy
\[   
{\CBT} (G, k, S) = 
     \begin{cases}
       \infty & \text{if } S \text{ is not a {\cbtds} of }G\\
       \min\{|S|, k+1\} &\text{otherwise.} \\ 
     \end{cases}
\] 

\begin{definition}
\label{defn:parameterized-minimization}
For a parameterized minimization problem $\Pi$, the {\em optimum value} of an instance $(x, k)$ is $\OPT_{\Pi} (x, k) = \min_{s \in \Sigma^*, |s| \leq |x| + k} \Pi(x, k, s)$.
\end{definition}

\sloppy
For the case of {\CBTDS}, we define $\OPT(G, k) = \min_{S \subseteq V(G)}\{\CBT(G, k, S)\}$.
We now recall the other relevant definitions regarding approximate kernels.

\begin{definition}
\label{defn:approximate-preprocessing}
Let $\alpha \geq 1$ be a real number and let $\Pi$ be a parameterized minimization problem.
An {\em $\alpha$-approximate polynomial-time preprocessing algorithm} $\cA$ is a pair of polynomial-time algorithms.
The first one is called the {\em reduction algorithm} and the second one is called the {\em solution-lifting algorithm}.
Given an input instance $(x, k)$ of $\Pi$, the reduction algorithm is a function $\cR_{\cA}: \Sigma^* \times \nn \rightarrow \Sigma^* \times \nn$ that outputs an instance $(x', k')$ of $\Pi$.

The solution-lifting algorithm takes the input instance $(x, k)$, the reduced instance $(x', k')$ and a solution $s'$ to the instance $(x', k')$.
The solution-lifting algorithm works in time polynomial in $|x|, k, |x'|, k'$, and $|s'|$, and outputs a solution $s$ to $(x, k)$ such that the following holds: $$\frac{\Pi(x, k, s)}{\OPT_{\Pi}(x, k)} \leq \alpha \frac{\Pi(x', k', s')}{\OPT_{\Pi}(x', k')}$$

\smallskip
\sloppy
The {\em size} of a polynomial-time preprocessing algorithm $\cA$ is a function ${\rm size}_{\cA}:\nn \rightarrow \nn$ defined as ${\rm size}_{\cA}(k) = {\rm sup}\{|x'| + k': (x',k') = \cR_{\cA}(x, k), x \in \Sigma^*\}$.
\end{definition}

\begin{definition}[Approximate Kernelization]
\label{defn:approx-kernelization}
An {\em $\alpha$-approximate kernelization} (or {\em $\alpha$-approximate kernel}) for a parameterized optimization problem $\Pi$, and a real $\alpha \geq 1$ is an $\alpha$-approximate polynomial-time preprocessing algorithm $\cA$ for $\Pi$ such that ${\rm size}_{\cA}$ is upper-bounded by a computable function $g: \nn \rightarrow \nn$.
If $g$ is a polynomial function, we call $\cA$ an {\em $\alpha$-approximate polynomial kernelization algorithm}.
\end{definition}

\begin{definition}[Approximate Kernelization Schemes]
\label{defn:approx-kernel-schemes}
A {\em polynomial-size approximate kernelization scheme (PSAKS)} for a parameterized problem $\Pi$ is a family of $\alpha$-approximate polynomial kernelization algorithms, with one such algorithm for every fixed $\alpha > 1$. 
\end{definition}

\begin{definition}[Time-efficient PSAKS]
\label{defn:time-efficient-PSAKS}
A PSAKS is said to be {\em time efficient} if both the reduction algorithm and the solution lifting algorithms run in $f(\alpha)|x|^{c}$ time for some function $f$ and a constant $c$ independent of $|x|, k$, and $\alpha$.
\end{definition}

\section{Approximate kernel for \textsc{Connected \(\eta\)-Treedepth Deletion}}\label{sec:approxKernel}

In this section, we describe a $(1+\eps)$-approximate kernel for {\CBTDS}. For the entire proof, let us fix a constant \(\eta\in \mathbb{N}\), the instance \((G,k)\) of \textsc{Connected \(\eta\)-Treedepth Deletion}, as well as \(\eps\in \mathbb{R}\) such that \(0<\eps\le 1\).
We prove by Theorem~\ref{thm:psaks-CBTDS-main} that {\CBTDS} admits a $(1+\eps)$-approximate kernel with $k^{\OO (\ceil{{2^{\eta + \ceil{10/\eps}}\eta}/{\eps}})}$ vertices.
As $\eta$ and $\eps$ are fixed constants, the hidden constants in Big-Oh notation could depend both on $\eta$ and $\eps$.

\paragraph{\bf Overview of the Algorithm.}

{Our reduction algorithm works in three phases.

\begin{itemize}
	\item {\bf Phase 1:}
First, observe that in order for a \cbtds\ to exist, at most one connected component of \(G\) can have treedepth more than \(\eta\) and hence, we may focus on the case when \(G\) is connected. We then show that we can decompose the graph \(G\) into three sets \(X\), \(Z\), and \(R\) such that \(X\) is an \btds, the size of the neighborhood of every component \(C\) of \(G[R]\) in \(Z\) is at most \(\eta\) and every \btds\ \(S\) of size at most \(k\) hits all but at most \(\eta\) neighbors of \(C\) in \(X\). {\color{black} We describe this part in Section~\ref{sec:decomposition-of-G}.}

	\item {\bf Phase 2:}
Unfortunately, we cannot identify which of the (at most) \(\eta\) vertices will not be in the solution. However, if the neighborhood of \(C\) is large (at least some constant depending on \(\eta\) and \(\eps\)), then including the whole neighborhood in the solution is not ``too'' suboptimal, as we can add the \(\eta\) vertices to \(S\) and connect them using at most \(2^{\eta}\eta\) additional vertices from within \(C\). While we cannot remove the neighborhood of \(C\) from \(G\) at this point, as we are not able to ensure at this point that it will be connected to the solution we find, we can force the neighborhood of \(C\) in every solution by adding a small gadget to \(G\). Repeating this procedure allows us to identify a set of vertices \(H\subseteq X\) that we can safely force into a solution without increasing the size of an optimal solution too much. Moreover, we obtain that every component of \(G[R]\) has only constantly many neighbors outside of \(H\).
{\color{black} For the description and analysis of this phase, see Reduction Rule~\ref{red-rule:forcing-neighborhood-of-C}, Lemma~\ref{lemma:number-of-executions-rule-2-treedepth}, and Lemma~\ref{lemma:alpha-safeness-red-rule-forcing-neighborhood-of-C}}.

	\item {\bf Phase 3:} 
Notice now that the vertices of any solution for \CBTDS\ can be split into two parts - {\em the obstruction hitting vertices}, such that removal of these vertices guarantees treedepth at most \(\eta\), and {\em the connector vertices} that are only there to provide connectivity to a solution. Now all connected components of \(G[R]\) have treedepth at most \(\eta\). 
Moreover, we are guaranteed that any solution of size at most \(k\) contains all but at most \(2\eta\) neighbors of a connected component \(C\) of \(G[R]\). Hence, if our goal was only hitting the obstructions in \(G\), then we could assume that \(S\) contains at most \(2\eta\) vertices of every connected component of \(G[R]\). However, there are two problems. We do not know which \(2\eta\) vertices are in \(N(C)\setminus S\) and vertices of \(C\) can also provide connectivity to \(S\). This requires the use of careful problem-specific argumentation and reduction rules. 

To resolve the first problem, we observe that \(N(C)\setminus H\) has already constant size and we can classify the (subsets of) vertices of \(C\) into types depending on their neighborhood in \(N(C)\setminus H\). To resolve the second issue, we allow each connected component of \(G[R]\) to have much larger, albeit still constant, intersection with \(S\).
We furthermore observe that if we chose this constant, denoted by \(\lambda\), then we can include the whole neighborhood of every component \(C\) that intersects a solution \(S\) in more than \(\lambda\) vertices, into the solution without increasing the size of the solution too much.
Now, we finally can identify vertices that are not necessary for any solution that intersects every component in at most \(\lambda\) vertices. 
 Denote the set of these vertices \(\cM\). There is no danger in removing such vertices for hitting the obstructions, as for every component \(C\) of \(G'-(X\cup Z)\) that intersects more than \(\lambda\) vertices of a solution in the reduced instance \(G'\) our solution-lifting algorithm adds the neighborhood of \(C\) into the solution. However, removing all of these vertices may very well destroy the connectivity of the solution. Here, we make use of Propositions~\ref{proposition:approx-k-steiner-tree}~and~\ref{proposition:steiner-tree-terminals} to find a small subset \(\cN\) of vertices in \(\cM\) such that \(G- (\cM\setminus \cN)\) actually have a \cbtds\ of approximately optimal size.
 {\color{black} The description of this last and most crucial phase is given in Section~\ref{sec:good-solution-characteristics} and Section~\ref{sec:removing-irrelevant-vertices}}.
\end{itemize}
 
\subsection{Decomposition of the Graph \(G\)}
\label{sec:decomposition-of-G}

We first observe that we can remove all connected components of \(G\) that already have treedepth at most \(\eta\), as we do not need to remove any vertex from such a component.
\begin{reduction rule}
\label{red-rule:redundant-component-removal}
Let $C$ be a connected component of $G$ such that $\td (G[C]) \leq \eta$.
Then, delete $C$ from $G$.
The new instance is $(G - C, k)$.
\end{reduction rule}

Observe that Reduction Rule~\ref{red-rule:redundant-component-removal} is an approximation preserving reduction rule.
What it means is that given a $c$-approximate {\cbtds} of $(G - C, k)$, we can in polynomial time compute a $c$-approximate {\cbtds} of $(G, k)$. 
Suppose that $G$ has two distinct connected components $C_1$ and $C_2$ such that ${\td}(G[C_1]), \td (G[C_2]) > \eta$. 
Then any {\btds} for $G$ has to contain vertices from both connected components of $G$.
It implies that $G$ does not admit any {\cbtds}.
Therefore we can output any constant size instance without a solution, such as $(K_{\eta + 2} \uplus K_{\eta + 2}, 1)$ as the output instance. 
Hence, we can assume that $G$ is a connected graph.

We start by constructing a decomposition of the graph such that $V(G) = X \uplus Z \uplus R$ satisfying some crucial properties that we use in our subsequent phases of the preprocessing algorithm.
The construction is inspired by the decompositions used by Fomin et al.~\cite{FominLMS12} (for the {\sc Planar $\cF$-Deletion} problem) and Giannopoulou et al.~\cite{GiannopoulouJLS17} (for the specific case of {\sc $\eta$-Treedepth Deletion}).

\begin{lemma}
	\label{lemma:decomposition-of-G}
	There exists a polynomial-time algorithm that either correctly concludes that no \btds\ $S$ for \(G\) of size at most $k$ exists, or it constructs a partition $V(G) = X \uplus Z \uplus R$ such that the following properties are satisfied.
	\begin{enumerate}
		\item\label{decom-prop-1} $X$ is an $\eta$-treedepth deletion set of $G$ and $|X| = \cO(k)$.
		\item\label{decom-prop-2} $|Z| = \cO(k^{3})$.
		\item\label{decom-prop-3} For every connected component $C$ of $G[R]$,{$|N_G(C) \cap Z| \leq \eta$.}
		\item\label{decom-prop-4} Let $C$ be a connected component of $G[R]$. Then, for any $\eta$-treedepth deletion set $S$ of size at most $k$, it holds that $|(N_G(C) \cap X) \setminus S| \leq \eta$. 
	\end{enumerate}
\end{lemma}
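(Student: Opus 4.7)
The plan is to proceed in three stages, inspired by the protrusion-style decompositions of Fomin et al.~\cite{FominLMS12} and Giannopoulou et al.~\cite{GiannopoulouJLS17}. First, I would apply Proposition~\ref{prop:approximation} to compute an \btds\ $X$ of $G$ of size $\OO(k)$; if $|X| > ck$ where $c$ is the approximation factor of that proposition, then no \btds\ of size $\le k$ exists at all, so in particular no \cbtds\ does either and we can safely report no solution. Otherwise, keeping this $X$ immediately yields property~\ref{decom-prop-1}.

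Next, I would build $Z$ componentwise. For each connected component $D$ of $G-X$, since $\td(D)\le \eta$, Proposition~\ref{proposition:computing_elimination_tree} gives a treedepth decomposition $T_D$ of depth at most $\eta$ in polynomial time. The construction of $Z$ should capture, for every ``candidate separation'' of $D$ relative to $X$---informally, for each small subset of $N_G(D)\cap X$---a bounded set of vertices of $D$ that witness this separation. A natural implementation is to mark, for every suitable tuple of neighbors in $D$ of vertices of $N_G(D)\cap X$, their LCA in $T_D$, and include the upward closure of the marked set in $Z$. This keeps $Z\cap V(D)$ contained in a union of root-to-leaf paths of $T_D$ of length $\le \eta$, forcing $|N_G(C)\cap Z|\le \eta$ for each component $C$ of $G[R]$ and giving property~\ref{decom-prop-3}; combined with $|X|=\OO(k)$, a triple-counting argument over witnessed separations across all components $D$ yields the $\OO(k^3)$ bound of property~\ref{decom-prop-2}.

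Property~\ref{decom-prop-4} is the most delicate and would be established by contradiction. Fix a component $C$ of $G[R]$ lying inside a component $D$ of $G-X$, an \btds\ $S$ with $|S|\le k$, and assume $|(N_G(C)\cap X)\setminus S|\ge \eta+1$. The marking step is designed precisely so that if this holds, then the ``witness'' structure built into $Z$ for this particular subset of $N_G(D)\cap X$ forces a subgraph of $C\cup (N_G(C)\cap X)$ to survive in $G-S$; using Proposition~\ref{lemma:bounded-treedepth-diameter} to bound the internal geometry of bounded-treedepth pieces, one can then exhibit inside this survivor a subgraph of treedepth strictly greater than $\eta$, contradicting the assumption that $S$ is an \btds.

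The principal obstacle is calibrating the marking so that $Z$ simultaneously respects the $\OO(k^3)$ size bound, the degree-$\eta$ constraint of property~\ref{decom-prop-3}, and the comprehensiveness needed for the obstruction argument in property~\ref{decom-prop-4}: too sparse a marking leaves unmarked separators that a small $S$ could exploit to avoid $N_G(C)\cap X$, while too dense a marking blows $|Z|$ beyond $\OO(k^3)$. The LCA-closure technique of \cite{GiannopoulouJLS17} will be central here, suitably adapted to the approximation-kernelization setting we face.
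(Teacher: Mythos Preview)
Your first stage matches the paper, and your upward-closure argument for Property~\ref{decom-prop-3} is also what the paper uses. The construction of $Z$ and the proof of Property~\ref{decom-prop-4}, however, are where your proposal diverges and where it has a real gap.

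The paper does not build $Z$ from LCA markings of neighbour tuples. Instead, for every non-adjacent pair $x,y\in X$ it computes a minimum vertex $x$--$y$ cut $Q_{x,y}$ through $G-X$; whenever $|Q_{x,y}|\le k+\eta$, it adds $Q_{x,y}$ together with its upward closure in a fixed depth-$\eta$ treedepth decomposition of $G-X$ to $Z$. This immediately gives the $\cO(k^3)$ bound ($\cO(k^2)$ pairs, each contributing $\cO(k)$ vertices after closure) and, via the upward closure, Property~\ref{decom-prop-3}. The crucial payoff is Property~\ref{decom-prop-4}: if $C$ is a component of $G[R]$ and $x,y\in N_G(C)\cap X$ are non-adjacent, then there is an $x$--$y$ path through $C\subseteq R$, so $Q_{x,y}$ was \emph{not} added to $Z$, hence $|Q_{x,y}|>k+\eta$. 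Now for any \btds\ $S$ with $|S|\le k$ and any depth-$\eta$ decomposition $Y_S$ of $G-S$, if two vertices $x,y\in(N_G(C)\cap X)\setminus S$ were not in ancestor--descendant relation in $Y_S$, their common ancestors in $Y_S$ would form an $x$--$y$ cut of size at most $\eta$ in $G-S$, so $S$ together with these ancestors would be an $x$--$y$ cut of size at most $k+\eta$ in $G$ --- contradicting the flow lower bound just established. Hence all of $(N_G(C)\cap X)\setminus S$ lies on a single root-to-leaf path of $Y_S$, and the bound $\eta$ follows.

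Your obstruction-based argument does not supply this connectivity lower bound, and without it the conclusion fails. Having $\eta+1$ vertices of $X\setminus S$ adjacent to $C$ in $G$ says nothing about their configuration in $G-S$: the set $S$ may contain arbitrarily many vertices of $C$ and place these $\eta+1$ vertices in different trees of $Y_S$, in which case no long root-to-leaf path is forced and no subgraph of treedepth $\eta+1$ is exhibited. LCA markings taken in the fixed decomposition $T_D$ of $G-X$ record structure relative to $X$, not relative to an arbitrary deletion set $S$, so there is no mechanism by which your ``witness structure'' is guaranteed to survive in $G-S$; Proposition~\ref{lemma:bounded-treedepth-diameter} does not help here either, since it bounds diameter rather than providing any lower bound on connectivity. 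The missing idea is precisely the min-cut construction of $Z$, which converts the statement ``$x$ and $y$ are both neighbours of the same component of $G[R]$'' into a flow lower bound that every deletion set of size at most $k$ must respect.
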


\begin{proof}
	We start by computing an \(\cO(1)\)-approximation for \(\eta\)-\textsc{Treedepth Deletion Set} given by Proposition~\ref{prop:approximation}. Let \(X\) be the output of the approximation algorithm. Note that \(X\) is  an $\eta$-treedepth deletion set of $G$. Since the algorithm of Proposition~\ref{prop:approximation} is \(\cO(1)\)-approximation, we get that either $|X| = \cO(k)$, or we can correctly conclude that $G$ has no $\eta$-treedepth deletion set of size at most $k$. For a connected component \(C\) in \(G-X\), let \(Y_C\) denotes some treedepth decomposition of \(G[C]\) of depth \(\eta\) and let \(Y = \bigcup_{C \text{ is a component of } G-X} Y_C\). Observe that \(Y\) is a treedepth decomposition of \(G-X\). As \(\eta\) is a constant, we can compute an \(Y_C\) for each connected component \(C\) in linear time invoking the algorithm from Proposition~\ref{proposition:computing_elimination_tree}. We will construct \(Z\) as follows. We start with \(Z=\emptyset\). Now for every pair of non-adjacent vertices \(x,y\in X\), we compute in polynomial time a minimum (vertex) \(x\)-\(y\) cut \(Q_{x,y}\) in the graph \(G-X\)~\cite{ford1962flows}. If the size of \(Q_{x,y}\) is at most \(k+\eta\), then we add \(Q_{x,y}\) together with its upward closure in \(Y\) to \(Z\). Note that each rooted tree in \(Y\) has depth at most \(\eta\).
	Hence \(Q_{x,y}\) together with its upward closure in \(Y\) has at most \(\eta|Q_{x,y}|\) vertices and for each pair of vertices \(x,y\) in \(X\) we add to \(Z\) at most \(\eta k + \eta^2\) many vertices. It follows that \(|Z| = \cO(k^3)\). Finally we let \(R = V(G)\setminus(X\cup Z)\). It remains to show that Properties~\ref{decom-prop-3}~and~\ref{decom-prop-4} are satisfied. 
	
	Towards Property~\ref{decom-prop-3}, let \(C\) be a connected component of \(G[R]\) and let \(C'\) be the connected component of \(G-X\) such that \(C\subseteq C'\). We claim that \(N_G(C) \cap Z\) is a subset of the common ancestors of all the vertices in \(C\) in the treedepth decomposition \(Y_{C'}\). Now let \(p\in C\) and \(z\in N_G(C)\cap Z\) be such vertices that \(z\) is not an ancestor of \(p\) and let \(q\in C\) be a neighbor of \(z\). Note that \(N(C)\subseteq C'\cup X\), so \(z\in C'\). As \(Y_{C'}\) is a treedepth decomposition for \(G[C']\) and \(qz\) is an edge in \(G[C']\), it follows that \(q\) is either ancestor or descendant of \(z\). However, all ancestors of \(z\) in \(Y_{C'}\) are in \(Z\). Hence \(q\) is a descendant of \(z\). Now if \(p\) is a descendant of \(q\), then it is also a descendant of \(z\). Similarly, if \(p\) is an ancestor of \(q\), then either $p$ is an ancestor or $q$ or a descendant of $q$. If $q$ is a descendant of \(z\), then this leads to a contradiction to our assumption.
	Otherwise $q$ is an ancestor in \(z\). However, every ancestor of \(z\) is in \(Z\), which contradicts the fact that \(p\) is in \(C\). It follows that \(p\) and \(q\) are not in an ancestor-descendant relation. Moreover, their least common ancestor, denoted \(\operatorname{lca}(p,q)\), in \(Y_{C'}\) is an ancestor of \(z\), as all the ancestors of their least common ancestor are also their common ancestors and \(z\) is not an ancestor of \(p\). The fact that \(\operatorname{lca}(p,q)\) is an ancestor of \(z\) then follows from the fact that both \(z\) and \(\operatorname{lca}(p,q)\) are ancestors of \(q\). Hence all common ancestors of \(p\) and \(q\) in \(Y_{C'}\) are ancestors of \(z\) and therefore in \(Z\). However, it is well-known and easy to see that the common ancestors of two vertices (that are not in the ancestor-descendant relationship) in a treedepth decomposition of a graph form a vertex cut between these two vertices. It follows that \(p\) and \(q\) cannot be in the same connected component of \(G[R]\), a contradiction. Therefore, indeed every vertex in \(N(C)\cap Z\) is an ancestors of all vertices in \(C\) in the treedepth decomposition \(Y_{C'}\). However, the height of the treedepth decomposition \(Y_{C'}\) is at most \(\eta\), therefore \(|N_G(C) \cap Z| \leq \eta\).
	
	Towards Property~\ref{decom-prop-4}, let \(C\) be a connected component of \(G[R]\), \(S\) be an \btds\ of size at most $k$, and \(Y_S\) be a treedepth decomposition of \(G-S\) such that the depth of each rooted tree in \(Y_S\) is at most \(\eta\). We will show that for every pair of vertices \(x,y\in (N_G(C) \cap X) \setminus S\) it holds that \(x\) and \(y\) are in the ancestor-descendant relationship in \(Y_S\). Note that this is possible if and only if all vertices in \((N_G(C) \cap X) \setminus S\) are on a single leaf-to-root path in some tree of \(Y_S\) and Property~\ref{decom-prop-4} follows from the fact that the depth of such tree is at most \(\eta\). Now, if there is an edge \(xy\in E(G)\), then \(x\) and \(y\) are in the ancestor-descendant relationship in \(Y_S\) by the definition of a treedepth decomposition for a connected graph. So we can assume that \(x\) and \(y\) are not adjacent in \(G\). Since \(C\) is a connected component of \(G[R]\) and \(x,y\in (N_G(C) \cap X)\), it follows that there exists a path from \(x\) to \(y\) in \(G[R\cup \{x,y\}]\). Moreover, if the size of the minimum \(x\)-\(y\) cut from \(x\) to \(y\) is at most \(k+\eta\), then \(Z\) contains one such  minimum \(x\)-\(y\) cut and hence there is no path from \(x\) to \(y\) in \(G[R\cup \{x,y\}]\). Therefore, the size of minimum \(x\)-\(y\) cut in \(G\) is at least \(k+\eta+1\). On the other hand, if \(x\) and \(y\) are not in the ancestor-descendant relationship in \(Y_S\), then the set \(\operatorname{CA}_{x,y}\) of their common ancestors forms a \(x\)-\(y\) cut in \(G-S\) and consecutively \(S\cup \operatorname{CA}_{x,y} \) is an \(x\)-\(y\) cut of size at most \(k+\eta\), a contradiction.
\end{proof}

{We run the algorithm of Lemma~\ref{lemma:decomposition-of-G} and we fix for the rest of the proof the sets of vertices \(X\), \(Z\), and \(R\) such that they satisfy the above lemma. Furthermore,} let us fix a $\delta=\frac{\eps}{10}$ and notice that since \(\eps\le 1\), we have that $(1+\delta)^4 \leq (1+\eps)$. Finally let us set $d = \ceil{\frac{{2^{\eta +3} \eta}}{\delta}}$.
{The next step of the algorithm is to find a set of vertices \(H\subseteq X\) such that every component \(C\) of \(G-(X\cup Z)\) has at most \(d+\eta\) neighbors in \(X\setminus H\). Our goal is to do it in a way that we can force \(H\) into every solution and increase the size of an optimal solution only by a small fraction.} 

\subsection{Processing Connected Components of $G - (X \cup Z)$ with Large Neighborhoods}
\label{subsec:large-component-treedepth}

We initialize $H := \emptyset$ and we apply the following reduction rule exhaustively.

\begin{reduction rule}
\label{red-rule:forcing-neighborhood-of-C}
Let $C$ be a connected component of $G[R]$. If {$|(N_G(C)\cap X)  \setminus H| > d + \eta$},
then for every $u \in N_G(C)\cap X$, add a {new} clique $J$ with $\eta + 1$ vertices {to \(G\)} such that $J \cap X = \{u\}$ and {\(N_G(J\setminus\{u\})=\{u\}\)}.
Add the vertices of $N_G(C)$ to $H$.
\end{reduction rule}

After we finish applying Reduction Rule~\ref{red-rule:forcing-neighborhood-of-C} on $(G, k)$ exhaustively, let {$G'$} be the resulting graph.
We prove the following two lemmas using Lemma \ref{lemma:decomposition-of-G}.

\begin{lemma}
\label{lemma:number-of-executions-rule-2-treedepth}
Let $S$ be an optimal {\cbtds} of $(G, k)$ of size at most $k$.
Then, Reduction Rule \ref{red-rule:forcing-neighborhood-of-C} is not applicable more than $|S|/d$ times.
\end{lemma}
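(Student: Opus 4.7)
The plan is to give a charging argument: each application of Rule~\ref{red-rule:forcing-neighborhood-of-C} can be blamed on a disjoint block of more than $d$ vertices of $S$, so the total number of applications is at most $|S|/d$. The key input is Property~\ref{decom-prop-4} of Lemma~\ref{lemma:decomposition-of-G}, which applies here because an optimal {\cbtds} of $(G,k)$ of size at most $k$ is in particular an $\eta$-treedepth deletion set of $G$ of size at most $k$.

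First, I would record a structural observation to justify that the bookkeeping makes sense across repeated applications of the rule. Each invocation of Rule~\ref{red-rule:forcing-neighborhood-of-C} for a component $C$ attaches cliques whose new vertices have all of their edges going to a single vertex of $X$; in particular, no new vertex lies in $R$ and no new edges are added inside $R$ or between $R$ and $X$. Hence the connected components of $G[R]$ and the sets $N_G(C)\cap X$ are invariant throughout the execution of the phase, so we may speak unambiguously of "the component $C$ of $G[R]$" and of $N_G(C)\cap X$ computed in the original graph $G$ even though $H$ and the ambient graph change.

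Next, enumerate the applications of Rule~\ref{red-rule:forcing-neighborhood-of-C} as occurring at components $C_1,\dots,C_t$ in the order in which they are processed, and let $H_{i-1}$ be the value of $H$ immediately before the $i$-th application (so $H_0=\emptyset$ and $H_i = H_{i-1}\cup N_G(C_i)$). The precondition of the rule gives
\[
|(N_G(C_i)\cap X)\setminus H_{i-1}| \;\ge\; d+\eta+1,
\]
while Property~\ref{decom-prop-4} of Lemma~\ref{lemma:decomposition-of-G} gives
\[
|(N_G(C_i)\cap X)\setminus S| \;\le\; \eta.
\]
Combining these, the set $A_i := ((N_G(C_i)\cap X)\setminus H_{i-1})\cap S$ satisfies $|A_i|\ge d+1$.

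Finally, I would argue that the $A_i$ are pairwise disjoint subsets of $S$. For $i<j$, we have $A_i\subseteq N_G(C_i)\subseteq H_i\subseteq H_{j-1}$, whereas by construction $A_j\subseteq V(G)\setminus H_{j-1}$, so $A_i\cap A_j=\emptyset$. Therefore
\[
|S| \;\ge\; \sum_{i=1}^{t} |A_i| \;\ge\; t(d+1),
\]
which yields $t\le |S|/(d+1)\le |S|/d$, exactly the desired bound. The only place that could require care is checking that Property~\ref{decom-prop-4} (a property of the original graph $G$) is still the correct tool to bound $|(N_G(C_i)\cap X)\setminus S|$ as the rule fires in the evolving graph; the observation in the first paragraph that neighborhoods $N_G(C)\cap X$ are preserved by the gadget additions is what makes this legitimate, and this is the only nontrivial step in the argument.
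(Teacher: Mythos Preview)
Your proof is correct and follows essentially the same charging argument as the paper: each application of Rule~\ref{red-rule:forcing-neighborhood-of-C} forces at least $d$ fresh vertices of $S$ into $H$, via Property~\ref{decom-prop-4} of Lemma~\ref{lemma:decomposition-of-G} combined with the rule's precondition. Your write-up is in fact more careful than the paper's, making explicit the disjointness of the blocks $A_i$ and the invariance of $N_G(C)\cap X$ under the gadget additions.
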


\begin{proof}
Let $S$ be an optimal {\cbtds} of $(G, k)$ of size at most $k$.
Due to the item
(\ref{decom-prop-4}) of Lemma~\ref{lemma:decomposition-of-G}, $|(N(C) \cap X) \setminus S| \leq \eta$.
By the precondition(s), Reduction Rule~\ref{red-rule:forcing-neighborhood-of-C} is applicable only when $|(N(C)\cap X) \setminus H| > d + \eta$.
But, $S$ must contain at least $d$ vertices from $N(C) \setminus H$.
So, one execution of Reduction Rule~\ref{red-rule:forcing-neighborhood-of-C} {adds} at least $d$ new vertices from {$S$ to \(H\) and the lemma follows.}
\end{proof}

Using the above lemma, we prove the following lemma.

\begin{lemma}
\label{lemma:alpha-safeness-red-rule-forcing-neighborhood-of-C}
Let $(G',k')$ be the instance obtained after exhaustively applying the Reduction Rule~\ref{red-rule:forcing-neighborhood-of-C} on $(G, k)$ such that $k' = k$.
Then, the following conditions are satisfied.
\begin{itemize}
	\item Any {\cbtds} of $(G', k')$ is a {\cbtds} of $(G, k)$, and
	\item If $\OPT(G, k) \leq k$, then $\OPT(G', k') \leq (1+\delta)\OPT(G, k)$
\end{itemize}
\end{lemma}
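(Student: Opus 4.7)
The plan is to prove the two bullets separately, using the pendant cliques added by Reduction Rule~\ref{red-rule:forcing-neighborhood-of-C} to force a large common subset into every \cbtds\ of $G'$, and the bounded-diameter property of treedepth-$\eta$ subgraphs to cheaply restore connectivity when lifting a solution of $G$ into one for $G'$.

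For the first bullet, I would first show that every \cbtds\ $S'$ of $G'$ must contain $H_X := H \cap X$, i.e., every vertex that received a pendant $K_{\eta+1}$. Each such pendant clique has treedepth $\eta+1$, so $S'$ has to hit it; if the attachment vertex $u \in H_X$ is not in $S'$, then any vertex of the clique sitting in $S'$ has all of its neighbors inside that clique, and connectivity of $G'[S']$ forces $S' \subseteq J \setminus \{u\}$. But the rule's precondition always creates pendant cliques at more than $d + \eta \geq 2$ distinct vertices in a single firing, so such an $S'$ would leave another pendant clique unhit---a contradiction. Having established $H_X \subseteq S'$, every pendant-clique vertex in $S'$ is a leaf dangling from a vertex of $H_X \subseteq S'$, and can be removed one by one while preserving both the connectivity of $G'[S']$ and the treedepth of each component of the remainder. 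The resulting set $S' \cap V(G)$ is then a \cbtds\ of $G$ of size at most $|S'|$, which is the content of the first bullet (with the standard interpretation that the solution-lifting algorithm projects $S'$ to $V(G)$).

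For the second bullet, I would fix an optimal \cbtds\ $S$ of $G$ with $|S| \leq k$ and let $C_1, \ldots, C_m$ be the distinct components of $G[R]$ on which the rule fired. Property~(\ref{decom-prop-4}) of Lemma~\ref{lemma:decomposition-of-G} gives $|M_i| \leq \eta$ for $M_i := (N_G(C_i) \cap X) \setminus S$, while the rule's precondition yields $|(N_G(C_i) \cap X) \cap S| > d$, so an anchor $u_i \in (N_G(C_i) \cap X) \cap S$ always exists. I would then build $S'$ by adding to $S$ all the sets $M_i$ together with, for each $v \in M_i$, a shortest path through $C_i$ joining a neighbor of $v$ to a neighbor of $u_i$. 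Since $C_i$ is disjoint from $X$ and $X$ is an $\eta$-treedepth deletion set, we have $\td(G[C_i]) \leq \eta$, so by Proposition~\ref{lemma:bounded-treedepth-diameter} each such path contributes at most $2^\eta + 1$ connector vertices. Connectivity of $G'[S']$ is then immediate ($S$ is connected in $G \subseteq G'$, and every $v \in M_i$ is linked to $u_i \in S$ by its chosen path), and every component of $G' - S'$ is either a subgraph of a component of $G - S$ (treedepth at most $\eta$ since $S \subseteq S' \cap V(G)$) or a pendant-clique remnant isomorphic to $K_\eta$ (treedepth $\eta$).

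The delicate part---and the main obstacle---is the size bound. Each application contributes at most $\eta$ forced vertices plus at most $\eta(2^\eta + 1) \leq \eta \cdot 2^{\eta+1}$ connector vertices, i.e., at most $\eta \cdot 2^{\eta+2}$ new vertices in total; by Lemma~\ref{lemma:number-of-executions-rule-2-treedepth} the number of applications is at most $|S|/d$; and the choice $d = \lceil 2^{\eta+3}\eta/\delta \rceil$ calibrates these two estimates so that the total surplus is at most $\delta |S|$. This gives $\OPT(G', k') \leq |S'| \leq (1+\delta)|S| = (1+\delta)\OPT(G,k)$, as required. The key subtlety is that the rule's threshold $d$ must be simultaneously large enough to charge every firing to $d$ solution vertices from $S$ (enabling Lemma~\ref{lemma:number-of-executions-rule-2-treedepth}) and small enough relative to $2^{\eta+3}\eta/\delta$ so that the diameter-driven connector cost per application fits within the $\delta$-budget.
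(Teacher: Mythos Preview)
Your proposal is correct and follows essentially the same approach as the paper. For the first bullet, both you and the paper argue that the pendant cliques force every attachment vertex into any \cbtds\ of $G'$, after which the pendant vertices can be stripped while preserving connectivity and the treedepth bound, yielding $S'\cap V(G)$ as a \cbtds\ of $G$. For the second bullet, both arguments add the at most $\eta$ missing $X$-neighbors per firing (via Property~(\ref{decom-prop-4}) of Lemma~\ref{lemma:decomposition-of-G}) and connect each to an anchor in $S$ through the bounded-diameter component $C_i$ (Proposition~\ref{lemma:bounded-treedepth-diameter}), with Lemma~\ref{lemma:number-of-executions-rule-2-treedepth} bounding the number of firings by $|S|/d$ so that the total surplus fits within $\delta|S|$; your per-application accounting of ``forced $+$ connector'' vertices is slightly more explicit than the paper's, but the arithmetic and the use of $d=\lceil 2^{\eta+3}\eta/\delta\rceil$ are the same.
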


\begin{proof}
Let us prove the statements in the given order.
	
	\begin{claim}
		Any {\cbtds} of $(G', k')$ is a {\cbtds} of $(G, k)$.
	\end{claim}
	
\begin{claimproof}
	{Let \(S'\) be a \cbtds\ of \((G',k')\). Note that \(G\) is an induced subgraph of \(G'\), hence \(S=S'\cap V(G)\) is an \(\eta\)-treedepth deletion set of \((G,k)\). It only remains to show that \(S\) is connected. If \(S'\) does not contain a vertex from any clique \(J\) (added by some execution of Reduction Rule~\ref{red-rule:forcing-neighborhood-of-C}) other than the single vertex in \(J\cap H\), then \(S=S'\) and it is connected. Else, let \(J\) be a clique added by some application of Reduction Rule~\ref{red-rule:forcing-neighborhood-of-C} such that \(S'\cap (J\setminus H)\) is not empty. Note that \(N_G(J\setminus H)\) is a singleton containing some vertex \(u\in X\) and if \(S'\setminus J\) is non-empty then \(S'\) has to contain \(u\). It is easy to see that \(S'\setminus (J\setminus H)\) is also a \cbtds. The first item follows by repeating the same argument for every clique added by some application of Reduction Rule~\ref{red-rule:forcing-neighborhood-of-C}.}
\end{claimproof}
	\begin{claim}
		If $\OPT(G, k) \leq k$, then $\OPT(G', k') \leq (1+\delta)\OPT(G, k)$.
	\end{claim}
	\begin{claimproof}
			Let $S$ be an optimal solution of $(G, k)$ and $|S| \leq k$.
		We construct a feasible solution $S'$ of $(G', k')$ as follows.
		We set $S' := S \cup H$ first.
		{Note that \(S'\) is an \(\eta\)-treedepth deletion set of \(G'\), because for every newly added clique \(J\), we have that \(J\setminus S'\) is a connected component of \(G-S'\) of size \(\eta\) and therefore also treedepth of this component is at most \(\eta\).}
		However, this does not guarantee that $G[S \cup H]$ is connected.
		
		{Let \(h\) be a vertex in \(H\setminus S\) and let \(C\) be the component of \(G[R]\) such that the application of Reduction Rule~\ref{red-rule:forcing-neighborhood-of-C} on the component \(C\) added \(h\) to \(H\). Note that { \(|N_G(C)\cap X|> d+\eta\)} and since $|S| \leq k$, it follows from Lemma~\ref{lemma:decomposition-of-G} that \(|(N_G(C)\cap X)\setminus S|\le \eta\). Consecutively \(N_G(C)\cap X\cap S\) is not empty. Moreover, \(G[C]\) has treedepth at most \(\eta\) and by Proposition~\ref{lemma:bounded-treedepth-diameter} diameter at most \(2^{\eta}\). Hence the shortest path from \(h\) to any vertex of \(S\) has length at most \(2^{\eta}\). Applying the above argument for all vertices of \(H\setminus S\) we get that}
		at most $2^\eta|H \setminus S|$ additional vertices are required to make $H \cup S$ connected.
		We add those vertices and update the set $S'$.
		It implies that $|S'| \leq |S| + |H \setminus S|2^\eta$.
		Let us argue that $|S'| \leq (1+\delta)|S|$, that is \(|H \setminus S|2^\eta\le \delta|S|\).	
		{By} Lemma~\ref{lemma:number-of-executions-rule-2-treedepth}, Reduction Rule~\ref{red-rule:forcing-neighborhood-of-C} is applicable at most $|S|/d$ times.
		So, we can partition {$H \setminus S = H_1 \uplus H_2 \uplus \cdots \uplus H_{\ell}$ such that \(\ell \le \frac{|S|}{d}\)} and $H_i$ is the set of vertices added to $H$ at the $i$'th execution of Reduction Rule~\ref{red-rule:forcing-neighborhood-of-C}.
		But for all {$i \in [\ell]$}, $H_i$ is the set of vertices that are {in \(N(C)\cap X\) for some component \(C\) in \(G[R]\) and} outside $S$.
		So, {by Lemma~\ref{lemma:decomposition-of-G},  $|H_i| \leq \eta$}.
		{Hence, \(|H \setminus S|2^\eta \le \eta\cdot \ell\cdot 2^{\eta} \le  \eta\cdot \frac{|S|}{d}\cdot 2^{\eta} = \frac{\eta 2^{\eta}}{d}|S|\le \delta |S|\).}
	\end{claimproof}
This completes the proof of the lemma.
\end{proof}

\subsection{Understanding the structure of a good solution}
\label{sec:good-solution-characteristics}

{From now on we assume that we have applied  Reduction Rule~\ref{red-rule:forcing-neighborhood-of-C} exhaustively and, for the sake of exposition, we denote by \(G\) the resulting graph. Moreover, we also fix the set \(H\) we obtained from the exhaustive application of Reduction Rule~\ref{red-rule:forcing-neighborhood-of-C}.
It follows that every connected component of $G - (X \cup Z)$ have at most $d + 2\eta$ neighbors outside $H$.}

Furthermore, Reduction Rule~\ref{red-rule:forcing-neighborhood-of-C} ensures the following observation.

\begin{observation}
\label{obs:red-rule-2-implication}
Any feasible {\cbtds} of $(G, k)$ must contain $H$.
\end{observation}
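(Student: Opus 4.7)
The plan is to prove the observation by a direct argument exploiting the clique gadget added by Reduction Rule~\ref{red-rule:forcing-neighborhood-of-C}. Fix any $u \in H$ and let $S$ be an arbitrary feasible \cbtds\ of $(G,k)$; I want to show $u \in S$. By construction, when $u$ was added to $H$, the rule simultaneously attached to $u$ a clique $J_u$ of size $\eta+1$ whose vertex set satisfies $J_u \cap X = \{u\}$ and $N_G(J_u \setminus \{u\}) = \{u\}$. In other words, the $\eta$ vertices of $J_u \setminus \{u\}$ are pendant to the rest of $G$ through $u$ alone.

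I would first argue that $S$ must intersect $J_u$. Since $J_u$ induces a $K_{\eta+1}$ and treedepth is monotone under subgraphs, any component of $G-S$ that contains all of $J_u$ would have treedepth at least $\eta+1$, violating feasibility. Hence $S \cap J_u \neq \emptyset$. Next, assume for contradiction that $u \notin S$; then $\emptyset \neq S \cap J_u \subseteq J_u \setminus \{u\}$. Every vertex $v \in J_u \setminus \{u\}$ has its entire neighborhood in $J_u$, so all neighbors of $v$ that lie in $S$ must lie in $S \cap (J_u \setminus \{u\})$. Consequently, the connected component of $G[S]$ containing any such $v$ is entirely contained in $J_u \setminus \{u\}$. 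Since $G[S]$ is connected by assumption, this forces $S \subseteq J_u \setminus \{u\}$.

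The final step is to derive a contradiction from $S \subseteq J_u \setminus \{u\}$. Let $G_0$ denote the graph present after Reduction Rule~\ref{red-rule:redundant-component-removal} but before any application of Reduction Rule~\ref{red-rule:forcing-neighborhood-of-C}. As noted in the discussion preceding the decomposition lemma, $G_0$ is connected and has treedepth strictly greater than $\eta$ (otherwise Reduction Rule~\ref{red-rule:redundant-component-removal} would have eliminated it, or the instance would be declared a trivial no-instance). The vertices of $J_u \setminus \{u\}$ are freshly introduced by Reduction Rule~\ref{red-rule:forcing-neighborhood-of-C}, so $V(G_0) \cap S = \emptyset$, meaning $G_0$ sits inside $G - S$ as an induced subgraph. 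Hence the component of $G-S$ containing $V(G_0)$ has treedepth at least $\td(G_0) > \eta$, contradicting feasibility of $S$. We conclude $u \in S$; since $u \in H$ was arbitrary, $H \subseteq S$.

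The only subtle point I expect is the connectivity-based step that forces $S$ to lie wholly inside $J_u \setminus \{u\}$ once $u$ is excluded; everything else is a routine application of treedepth monotonicity together with the deliberately isolated design of the attached gadget.
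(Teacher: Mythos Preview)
Your proof is correct and follows the same approach as the paper's argument, which is essentially a one-line sketch: the paper merely observes that for every $u\in H$ there is an attached $(\eta+1)$-clique $J$ with $N_G(J\setminus\{u\})=\{u\}$, so any \cbtds\ containing a vertex in $J$ and a vertex outside $J$ must pass through $u$. You expand this into a complete argument, in particular handling the edge case where $S$ might lie entirely inside $J_u\setminus\{u\}$ by invoking the original connected component $G_0$ of treedepth greater than $\eta$; the paper leaves this implicit.
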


The above observation follows because for every vertex \(u\in H\) there exists a clique \(J\) of size \(\eta+1\) that contains \(u\) and \(N_G(J\setminus \{u\})=\{u\}\).
So every {\cbtds} that contains a vertex in \(J\) and a vertex outside of \(J\) contains also \(u\). Notice that if \(S\) is a \cbtds\ for \((G,k)\) and \(C\) is a component of \(G[R]\), then \((S\setminus C)\cup N(C)\) is an \btds.
Moreover by Proposition \ref{lemma:bounded-treedepth-diameter}, we can connect each vertex from \(N(C)\setminus H\) to \(H\) using at most \(2^{\eta}\) vertices of \(C\). 
Hence the only reason for a component of \(G[R]\) to contain more than \(2^{\eta}(d+2\eta)\) vertices is if \(C\) also provides connectivity to \(S\). Let us fix for the rest of the proof $\lambda = 2^{\eta}\ceil{\frac{d + 2\eta}{\delta}}$. 

Let $T \subseteq (X\cup Z)\setminus H$. We denote by $\components{T}$ the set of all the components $C$ of $G - (X \cup Z)$ such that $N(C)\setminus H = T$. Note that, by the definition of $H$, if $|T|\ge d+2\eta+1$, then $\components{T}= \emptyset$.
Let $S$ be an $\eta$-treedepth deletion set of $G$.
Suppose that for every $T\subseteq (X\cup Z)\setminus H$, {if $S$ intersects $\components{T}$ in more than $\lambda$ vertices (i.e. $|\bigcup_{C\in \components{T}}(S\cap C)| > \lambda$), then $T\subseteq S$.
Then we say that $S$ is a \emph{\nice} treedepth deletion set of $G$.

From now on, we focus on \nice\ \cbtds{s}. We first reduce the instance \((G,k)\) to an instance \((G',k)\) such that 
\begin{itemize}
	\item \(G'\) is an induced subgraph of \(G\),
	\item every nice {\cbtds} for \((G',k)\) is also a \nice\ \cbtds\ for \((G,k)\), and
	\item \((G',k)\) has a \nice\ \cbtds\ of size at most \((1+\delta)^2\OPT(G,k)\).
\end{itemize}

Afterwards, we show that any \cbtds\ \(S'\) for \((G',k)\) can be transformed into a \nice\ \cbtds\ for \((G',k)\) of size at most \((1+\delta)|S'|\).
To obtain our reduced instance we will heavily rely on the following lemma that helps us identify the vertices that only serve as connectors in any \nice\ \cbtds.}

\begin{lemma}
\label{lemma:component-characteristics}
	Let \(G'\) be an induced (not necessarily strict) subgraph of \(G\) and \(T, C_1, C_2,\ldots, C_\ell\) be pairwise disjoint sets of vertices in \(G\) such that:
	\begin{itemize}
		\item \(G'[C_i]\) is connected,
		\item \(N(C_i)\setminus H = T\), for some fixed set of vertices \(T\), and
		\item \(\td(G'[C_i]) = \td(G'[C_j])\) for all \(i,j\in [\ell]\).
	\end{itemize}	
	Now let \(S\) be an \(\eta\)-treedepth deletion set in \(G'\) such that \(H\subseteq S\) and let \(\mathcal{J} = \{C_i\mid C_i\cap S = \emptyset \}\), i.e., \(\mathcal{J}\) is the set of components in  \(C_1, C_2,\ldots, C_\ell\) that do not contain any vertex of \(S\). 	
	If \(|\mathcal{J}|\ge \eta + 1\), then \(S' = S\setminus (\bigcup_{i\in[\ell]} C_i)\) is an \(\eta\)-treedepth deletion set in \(G'\). 
\end{lemma}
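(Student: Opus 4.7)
The plan is to prove the lemma by lifting a treedepth decomposition $Y_S$ of $G'-S$ of depth at most $\eta$ into one of $G'-S'$ of the same depth bound. Let $\eta^{*} := \td(G'[C_i])$ denote the common treedepth of the components, which is at most $\eta$ since $G'[C_j]\subseteq G'-S$ for any $C_j\in\mathcal{J}$. The central tool throughout is the standard fact that for any induced subgraph $K$ with $G[K]$ connected, the lowest common ancestor of $V(K)$ in the decomposition must lie inside $V(K)$ --- otherwise $V(K)$ would straddle two subtrees of its LCA, forcing an edge of $G[K]$ between incomparable vertices.

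Applying this fact to each $C_j\in \mathcal{J}$ gives a root $r_j \in C_j$ whose subtree in $Y_S$ contains all of $C_j$. Restricting $Y_S$ to $C_j$ then yields a treedepth decomposition of $G'[C_j]$ of depth at most $\eta - \operatorname{depth}(r_j) + 1$, so $\operatorname{depth}(r_j)\leq \eta-\eta^{*}+1$. Since $|\mathcal{J}|\geq \eta+1$ and any root-to-leaf chain in $Y_S$ has at most $\eta$ vertices, two roots $r_{j_1}, r_{j_2}$ must be incomparable, and their LCA $v$ satisfies $\operatorname{depth}(v)\leq \eta-\eta^{*}$. The crucial claim is then that every $t\in T\setminus S$ lies on the single root-to-$v$ chain in $Y_S$. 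For any $j\in\mathcal{J}$ the set $\{t\}\cup C_j$ is connected in $G'-S$ (as $t$ is adjacent to $C_j$), and the fact applied to this set forces $t$ and $r_j$ to be comparable in $Y_S$: were they incomparable, the LCA of $\{t,r_j\}$ would be a proper ancestor of $r_j$, hence outside $C_j$, and also distinct from $t$, contradicting inclusion in $\{t\}\cup C_j$. Combining comparability with both $r_{j_1}$ and $r_{j_2}$ and the incomparability of these two roots rules out every configuration except $t$ being an ancestor of both; hence $t$ is an ancestor of $v$.

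With this in hand I build $Y'$ on $B_0\cup \bigcup_{i\in[\ell]} C_i$, where $B_0 := V(G')\setminus S\setminus \bigcup_i C_i$. Take $Y_2$ to be $Y_S$ restricted to $B_0$, which is a treedepth decomposition of $G'[B_0]$ of depth at most $\eta$ in which the linear order on $T\setminus S$ is preserved. If $T\setminus S=\emptyset$ the lemma is immediate, since each $C_i$ is then isolated in $G'-S'$ (its external neighbors all lie in $T\cup H\subseteq S$). Otherwise, let $t^{*}$ be the deepest element of $T\setminus S$ in $Y_2$ and, for each $i\in[\ell]$, compute a fresh treedepth decomposition of $G'[C_i]$ of depth $\eta^{*}$ and attach it as a new child subtree of $t^{*}$. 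Validity holds because cross-edges between any $C_i$ and $B_0$ pass only through $T\setminus S$, whose vertices are ancestors of $t^{*}$ and therefore of every attached subtree, while distinct $C_i$'s share no edges (their neighborhoods sit inside $T\cup H$). The depth bound is $\operatorname{depth}(t^{*}) + \eta^{*} \leq (\eta-\eta^{*}) + \eta^{*} = \eta$, closing the argument. The main obstacle is the middle step: without the hypothesis $|\mathcal{J}|\geq \eta+1$ one cannot locate two incomparable roots to force $T\setminus S$ onto a single chain, and the attachment step would lose its valid anchor.
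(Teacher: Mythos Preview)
Your proof is correct and follows essentially the same strategy as the paper's: both exploit $|\mathcal{J}|\ge\eta+1$ via pigeonhole to place all of $T\setminus S$ on a short root-chain in the treedepth decomposition of $G'-S$, and then graft depth-$\eta^{*}$ decompositions of the $C_i$'s beneath that chain. The only cosmetic difference is in the final construction---the paper leaves the components $C_j\in\mathcal{J}$ where they already sit in $F$ and attaches only the $C_j\notin\mathcal{J}$ as new siblings of one distinguished $x_i$, whereas you uniformly strip out all the $C_i$'s and re-attach every one of them below the deepest vertex $t^{*}\in T\setminus S$; your version is arguably tidier and avoids having to worry about where the vertices of $C_j\setminus S$ (for $C_j\notin\mathcal{J}$) were sitting in the original forest.
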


\begin{proof}
	First note that \(\td(G'[C_i])\le \eta\), otherwise for every \(C_i\in \mathcal{J}\) we have \(G'[C_i]\subseteq G'-S\) and \(\td(G'[C_i])> \eta\). Since treedepth is closed under taking subgraphs, this implies \(\td(G'-S)> \eta\). Moreover, if \(T\subseteq S\), then for all \(i\in [\ell]\) the vertex set \(C_i\) induces a connected component of \(G'-S\). Since \(C_i\cap T=\emptyset\), we get that for all \(i\in [\ell]\) the set \(C_i\) induces a connected component of \(G'-S'\). Hence, in this case we get that every \(C_i\) is in its own component of \(G'-S'\) of treedepth at most \(\eta\). We can now assume that \(T\setminus S\) is not empty. 
	
	Let \(F\) be a treedepth decomposition of \(G-S\) of depth \(\eta\) with root \(r\). For \(C_i\in \mathcal{J}\), let \(x_i\) be the vertex of \(C_i\) with the minimum distance to \(r\) in \(F\). Note that since \(G'[C_i]\) is connected, it follows from the properties of the treedepth decomposition that all vertices of \(C_i\) are in the subtree of \(F\) rooted in \(x_i\). Hence the depth of the tree rooted in \(x_i\) is at least \(\td(G'[C_i])\). 
	Now for every vertex \(w\in T\setminus S\), there is an edge between \(w\) and some vertex \(y_i\) in \(C_i\) and \(w\) is either an ancestor or descendant of \(y_i\) in \(F\). Moreover, \(x_i\) is an ancestor of \(y_i\) and consecutively \(w\) is either an ancestor or a descendant of \(x_i\). Now if \(w\) is a descendant of \(x_i\), then for all \(C_j\in \mathcal{J}\setminus\{C_i\}\), we get that either \(x_j\) is a descendant of \(w\) and hence of \(x_i\) as well or \(x_j\) is an ancestor of \(w\). In this case both \(x_i\) and \(x_j\) are ancestors of \(w\) and they are on the unique path from \(w\) to \(r\). Hence, either \(x_i\) is an ancestor of \(x_j\) or \(x_j\) is an ancestor of \(x_i\). It follows that if for every component \(C_i\in \mathcal{J}\) there is a vertex \(w\in T\setminus S\) such that \(w\) is a descendant of \(x_i\), then for every pair of components \(C_i, C_j\in \mathcal{J}\), we get that the vertices \(x_i, x_j\) are in the ancestor-descendant relationship.
	This is only possible if all vertices \(x_i\), \(C_i\in \mathcal{J}\), are on a single leaf-root path.  But \(|\mathcal{J}|\ge \eta + 1\) and we have a contradiction with the fact that \(F\) is a treedepth decomposition of \(G-S\) of depth at most \(\eta\). Therefore, for at least one \(C_i\in \mathcal{J}\) we have that the vertex \(x_i\) is descendant of all vertices in \(T\setminus S\) in the treedepth decomposition \(F\). To get a treedepth decomposition \(F'\) of \(G'-S'\) we simply add \(\ell-|\cJ|\) children in \(F\) to the parent of \(x_i\) and attach a treedepth decomposition of height \(\td(G'[C_j])=\td(G'[C_i])\) for each \(C_j\notin \mathcal{J}\) to one of the newly added vertices of \(F\). Since \(x_i\) is at distance at most \(\eta-\td(G'[C_i])\) from the root \(r\), it follows that the height of the resulting rooted tree \(F'\) is at most \(\eta\). Furthermore, all vertices in \(T\setminus S\) are ancestors of \(x_i\) and hence are ancestors of all the newly added vertices. Since the neighborhood of every vertex in \(C_j\) in the graph \(G'-S'\) is a subset of \(C_j\cup (T\setminus S)\), it follows that \(F'\) is indeed a treedepth decomposition of \(G'-S'\). 
\end{proof}



\subsection{Identifying Further Irrelevant Vertices}
\label{sec:removing-irrelevant-vertices}

{We now mark some vertices of $G - (X \cup Z)$ that we would like to remove, as these vertices are not important for hitting obstructions in a \nice\ \cbtds. We note that we will end up not removing all of these vertices, as some of them will be important as connectors for obstruction hitting vertices in the solution. However, this step lets us identify a relatively small subset of vertices such that any \nice\ \btds\ for the subgraph induced by this subset of vertices is indeed \nice\ \btds\ for \(G\). We then make use of Propositions~\ref{proposition:approx-k-steiner-tree}~and~\ref{proposition:steiner-tree-terminals} to add some vertices back as possible connectors.}
{Recall that we fixed  $\lambda = 2^{\eta}\ceil{\frac{d + 2\eta}{\delta}}$. 
Let us set \(\cM=\emptyset\). We now describe two reduction rules based on Lemma~\ref{lemma:component-characteristics} that do not change \(G\) and only add vertices to \(\cM\).}
 {For \(T\subseteq (X\cup Z)\setminus H\) and \(i\in \mathbb{N}\), let \(\componentsTD{T}{i}\) denote the components \(C\in \components{T}\) such that \(\td(G[C])=i\).
} 
 \begin{reduction rule}\label{rrule:components}
 	Let $T \subseteq (X \cup Z) \setminus H$ and $i \in [\eta]$. If \(|\componentsTD{T}{i}|\ge \lambda + \eta + 2\), then add vertices of all but \(\lambda + \eta + 1\) of the components in \(\componentsTD{T}{i}\) to \(\cM\).
 \end{reduction rule}

\begin{reduction rule}
\label{rrule:single_component}
Let $C$ be a component of $G- (X \cup Z \cup \cM)$, $Y_C$ be a treedepth decomposition of $G[C]$ of depth at most $\eta$, and $i \in [\eta]$.
Moreover, let $v$ be a vertex in $C$ and $T\subseteq (N(C)\setminus H)\cup \uclos_{Y_C}(\{v\})$.
Finally, let $\mathcal{C} =\{C_1, C_2, \ldots, C_\ell\}$ be all the components of $G - T$ such that for all $j\in [\ell]$ it holds that $C_j\subseteq C$, $N(C_j)\setminus H = T$ and $\td (G[C_j])=i$.
If $|\mathcal{C}| \geq \lambda + \eta + 2$, then add the vertices of all but $\lambda + \eta + 1$ of the components in $\mathcal{C}$ to $\cM$.
\end{reduction rule}

Once we apply Reduction Rules \ref{rrule:components} and \ref{rrule:single_component} exhaustively and obtain a vertex set $\cM$, we use Lemma \ref{lemma:component-characteristics} in order to to prove the following two lemmas that provide some interesting characteristics (the following two lemmas) of nice {\btds}{s} of $G$. 

\begin{figure}[t]
\centering
	\includegraphics[scale=0.2]{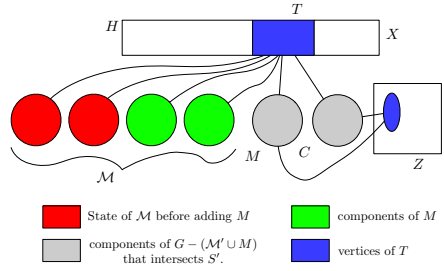}
	\caption{Illustration of Lemma \ref{lemma:component-characteristics} when Reduction Rule \ref{rrule:components} is applied. All components colored red, green, and grey are of equal treedepth and have exact neighborhood outside $H$.}
\label{fig:nice-solution-illustration}	
\end{figure}

\begin{lemma}
\label{lemma:nice_solution_after_marking}
	Let \(\cM\) be the set of vertices obtained by exhaustive application of Reduction Rules~\ref{rrule:components}~and~\ref{rrule:single_component} and let \(S'\) be a \nice\ \btds\ for \(G-\cM\). Then \(S'\) is a \nice\ \btds\ for \(G\).
(see Figure \ref{fig:nice-solution-illustration} for an illustration for Reduction Rule \ref{rrule:components}.)
\end{lemma}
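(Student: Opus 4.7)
The plan is to establish separately that $S'$ is (i) an $\eta$-treedepth deletion set of $G$ and (ii) nice in $G$, using Lemma~\ref{lemma:component-characteristics} as the principal tool for (i).

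For niceness, I would exploit that $S' \subseteq V(G)\setminus \cM$ implies $S'\cap \cM = \emptyset$. Hence for every $T\subseteq (X\cup Z)\setminus H$, only components of $G-(X\cup Z)$ not contained in $\cM$ can contribute to $\bigcup_{C\in\components{T}}(S'\cap C)$. These are precisely the components kept after Reduction Rule~\ref{rrule:components} (which removes whole components) and, for those further processed by Reduction Rule~\ref{rrule:single_component}, their surviving sub-components (which hold exactly the same $S'$-vertices as the parent). Combined with the niceness of $S'$ in $G-\cM$, this yields the required implication $|\bigcup_{C\in\components{T}}(S'\cap C)|>\lambda \Rightarrow T\subseteq S'$ in $G$.

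For the TDS property, I would begin with the trivially valid TDS $S^{(0)} := S' \cup H \cup \cM$ of $G$ and shrink it back toward $S'$ by eliminating $\cM$ one type at a time. Fix a type $(T,i)$ with $C_1,\dots,C_k$ kept and $C_{k+1},\dots,C_\ell$ marked, where $k\geq \lambda+\eta+1$. Niceness of $S'$ in $G-\cM$ gives either (A) $T\subseteq S'$, in which case the marked components are isolated in $G-(S'\cup H)$ with treedepth at most $i\le \eta$ and can be discarded from $S^{(0)}$, or (B) $|\bigcup_{j\le k}(S'\cap C_j)|\le \lambda$, in which case at least $\eta+1$ of the kept components are disjoint from $S'$, so Lemma~\ref{lemma:component-characteristics} applies with $G'=G$, collection $\{C_1,\dots,C_\ell\}$, and $S = S^{(0)}$, removing all $\cM$-vertices of this type from $S^{(0)}$ while maintaining a TDS. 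For types handled by Rule~\ref{rrule:single_component}, I would invoke Lemma~\ref{lemma:component-characteristics} on the induced subgraph $G[C\cup T]$ where $C$ is the parent component. Iterating over all marked types and re-adding any $S'$-vertices stripped in Case~B (permissible because supersets of TDSs are TDSs) yields that $S' \cup H$ is a TDS of $G$.

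The main obstacle is bridging from ``$S'\cup H$ is a TDS of $G$'' to ``$S'$ is a TDS of $G$'', since Lemma~\ref{lemma:component-characteristics} mandates $H\subseteq S$. I would address this by arguing that any nice $\eta$-TDS $S'$ of $G-\cM$ may be assumed without loss of generality to satisfy $H\subseteq S'$: each $u\in H$ anchors a pendant $K_{\eta+1}$-clique added by Reduction Rule~\ref{red-rule:forcing-neighborhood-of-C}, any $\eta$-TDS must hit this clique, and a local swap replaces any interior hitting vertex by $u$ itself without altering either the solution size or its niceness. A secondary subtlety is the type mismatch in Rule~\ref{rrule:single_component}, whose ``type'' $T$ includes ancestor vertices from $\uclos_{Y_C}(\{v\})$ not in $(X\cup Z)\setminus H$; there the niceness condition does not apply directly, but the counting still works because $S'$ can intersect at most $\lambda$ of the $\lambda+\eta+1$ retained sub-components, leaving at least $\eta+1$ untouched copies for the lemma.
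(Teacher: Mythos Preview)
Your approach is in the same spirit as the paper's—both hinge on Lemma~\ref{lemma:component-characteristics} to peel the $\cM$-vertices off a superset of $S'$—but the paper's argument is more streamlined. Rather than starting from $S^{(0)}=S'\cup H\cup\cM$ and shrinking type by type, the paper inducts on the individual applications of Reduction Rules~\ref{rrule:components} and~\ref{rrule:single_component}. For a single application adding $M$ to $\cM$ (with $\cM'$ the prior state), one simply notes that $S'\cup M$ is an \btds\ for $G-\cM'$, uses niceness of $S'$ in $G-(\cM'\cup M)$ to exhibit $\ge\eta+1$ retained components disjoint from $S'$, and applies Lemma~\ref{lemma:component-characteristics} once with $G'=G-\cM'$ and $S=S'\cup M$. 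The Rule~\ref{rrule:single_component} case is handled with the very same template on $G-\cM'$; your proposal to invoke Lemma~\ref{lemma:component-characteristics} on $G[C\cup T]$ instead would only yield an \btds\ of that induced subgraph, not of $G$, so an additional lifting step would be needed.

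Your treatment of the $H$-containment obstacle is a genuine gap. The lemma asks you to show that the \emph{given} $S'$ is a nice \btds\ of $G$; your ``without loss of generality'' swap produces a different set $S''$ and therefore proves a weaker statement. Worse, the swap can fail outright: the pendant cliques $J\setminus\{u\}$ attached to $u\in H$ all lie in $\componentsTD{\emptyset}{\eta}$, and if $|H|\ge\lambda+\eta+2$ then Reduction Rule~\ref{rrule:components} places all but $\lambda+\eta+1$ of them into $\cM$. For those $u\in H$ whose clique was marked, there is no surviving gadget in $G-\cM$ forcing $u$ into $S'$ or providing a vertex to swap against. You have correctly spotted a subtlety—the paper's own proof applies Lemma~\ref{lemma:component-characteristics} with $S=S'\cup M$ and tacitly relies on $H\subseteq S'$ as well—but your proposed resolution does not close it. In the paper's actual applications of Lemma~\ref{lemma:nice_solution_after_marking} (Lemmas~\ref{lemma:nice_connected_solution_in_reduced} and~\ref{lemma:solution_lifting}), the set fed into the lemma always arises from a \emph{connected} \btds, for which Observation~\ref{obs:red-rule-2-implication} guarantees $H\subseteq S'$; that is the contextual hypothesis that makes the argument go through.
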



\begin{proof}
	We prove that the lemma holds after a single application of each of the reduction rules. The lemma then follows by repeating the same argument for each application of a reduction rule. 
	\begin{claim}
		Let \(M\) be the set of vertices added to \(\cM\) by applying Reduction Rule~\ref{rrule:components} for $T \subseteq (X \cup Z) \setminus H$ and $i \in [\eta]$.
		Also suppose that \(\cM'\) is the state of \(\cM\) before adding \(M\). If \(S'\) is a \nice\ \btds\ for \(G-(\cM'\cup M)\), then \(S'\) is a \nice\ \btds\ for \(G-\cM'\).
	\end{claim}
	
	\begin{claimproof}
		Note that \(M\) is a union of some connected components in \(G-(X\cup Z)\), each with treedepth \(i\le \eta\), and hence \(\td(G[M])\le \eta\). Hence, if \(T\subseteq S'\), then the claim follows. Therefore, we can assume that \(T\not\subseteq S'\).
		Since \(S'\) is \nice\ \btds\ for \(G-(\cM'\cup M)\), it follows that \(|\bigcup_{C\in \components{T}}(S'\cap C)| \le \lambda\). Because Reduction Rule~\ref{rrule:components} was applied, there are at least \(\lambda+\eta+1\) components in \(G-(\cM'\cup M)\) of treedepth exactly $i$. Moreover, if $C$ is one of those at least $\lambda + \eta + 1$ components with ${\td}(G[C]) = i$, then \(N(C)\setminus H = T\).
		Therefore, at least \(\eta+1\) of these components do not contain any vertex of \(S'\). All of these components have same neighborhood outside $H$ and treedepth as the components in \(M\).
		As $S'$ is \nice\, and $S' \cup M$ is a {\btds} for $G - {\cM'}$, the Lemma~\ref{lemma:component-characteristics} 
		tells us that we can remove \(M\) from \(S'\cup M\) and preserve the fact that we have a {\btds} of $G - \cM'$.
	\end{claimproof}

\begin{claim}
	Let \(M\) be the set of vertices added to \(\cM\) by applying Reduction Rule~\ref{rrule:single_component} for a component \(C\), a treedepth decomposition \(Y_C\), a vertex \(v\), a set  \(T\subseteq (N(C)\setminus H)\cup \uclos_{Y_C}(\{v\})\), and a set of components \(\mathcal{C} =\{C_1, C_2, \ldots, C_\ell\}\). Furthermore, let us assume that \(\cM'\) is the state of \(\cM\) before adding \(M\). If \(S'\) is a \nice\ \btds\ for \(G-(\cM'\cup M)\). Then \(S'\) is a \nice\ \btds\ for \(G-\cM'\). 
\end{claim}

\begin{claimproof}
{The proof is similar to the proof of the previous claim.}
If \(N(C)\subseteq S\), then the claim follows from the fact that \(\td(G[C])\le \eta\), as \(C\) be a component of \(G-(X\cup Z\cup \cM)\). Hence, we can assume that \(N(C)\not\subseteq S\). In particular, \(N(C)\setminus H\not\subseteq S\) as \(H\) is in every \cbtds. Because \(S\) is a \nice\ \btds, it follows that \(|S\cap C|\le \lambda\). Because Reduction Rule~\ref{rrule:single_component} was applied, there are at least \(\lambda+\eta+1\) components \(C\) in \(G-(\cM'\cup M)\) with \(\td(G[C])=i\) and \(N(C)\setminus H = T\) and at least \(\eta+1\) of them do not intersect \(S\). Moreover, for each connected component \(D\) in \(G[M]\) it holds that  \(\td(G[D])=i\) and \(N(D)\setminus H = T\). Therefore, the Lemma~\ref{lemma:component-characteristics} implies that \(S'\) is a \nice\ \btds\ for \(G-\cM'\).
\end{claimproof}
\noindent
The above two claims complete the proof of the lemma.
\end{proof}

\begin{lemma}\label{lemma:nice_solution_after_marking_size}
	Let \(\cM\) be the set of vertices obtained by exhaustive application of Reduction Rules~\ref{rrule:components}~and~\ref{rrule:single_component}. 
	Then \(|V(G)\setminus \cM|= \bigoh(k^{3d+6\eta})\).
\end{lemma}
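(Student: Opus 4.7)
The plan is to split $V(G)\setminus \cM$ into three pieces --- $X$, $Z$, and the vertices in components of $G-(X\cup Z)$ that intersect $V(G)\setminus \cM$ --- and bound each separately. By Lemma~\ref{lemma:decomposition-of-G} we have $|X|+|Z|=\cO(k^3)$, which is absorbed by the target bound, so the task reduces to controlling the contribution from $R\setminus\cM$. I do this by bounding (a) the number of connected components of $G-(X\cup Z)$ with at least one vertex outside $\cM$, and (b) the size of each such component.

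For (a), the exhaustive application of Reduction Rule~\ref{red-rule:forcing-neighborhood-of-C} guarantees that $|N_G(C)\setminus H|\le d+2\eta$ for every relevant component $C$, combining $|N_G(C)\cap Z|\le \eta$ from Lemma~\ref{lemma:decomposition-of-G} with $|(N_G(C)\cap X)\setminus H|\le d+\eta$ from the termination condition of Rule~\ref{red-rule:forcing-neighborhood-of-C}. Reduction Rule~\ref{rrule:components} then caps $|\componentsTD{T}{i}|$ at $\lambda+\eta+1$ outside $\cM$ for every relevant profile $(T,i)$. Since the relevant $T$ range over subsets of $(X\cup Z)\setminus H$ of size at most $d+2\eta$, their number is at most $\binom{|X\cup Z|}{d+2\eta}=\cO(k^{3(d+2\eta)})=\cO(k^{3d+6\eta})$, which after multiplying by the $\eta$ choices for $i$ and the constant $\lambda+\eta+1$ yields $\cO(k^{3d+6\eta})$ surviving components.

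For (b), I argue that the branching factor of a DFS-based treedepth decomposition $Y_C$ of $G[C]$ (of depth at most $\eta$) is bounded by a constant depending only on $d,\eta,\lambda$, so that $|C|$ is constant. Concretely, for any $v\in C$ with children $u_1,\dots,u_b$ in $Y_C$, each child-subtree $T_{u_j}$ is connected (by the DFS property), and its ``boundary modulo $H$'' $T'_{u_j}:=N_G(T_{u_j})\setminus H$ lies in $(N_G(C)\setminus H)\cup \uclos_{Y_C}(\{u_j\})$, a set of size at most $d+3\eta$. Hence the ``profile'' $(T'_{u_j},\td(G[T_{u_j}]))$ takes one of at most $2^{d+3\eta}\cdot\eta$ values. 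Applying Reduction Rule~\ref{rrule:single_component} with $v'=u_j$, $T=T'_{u_j}$, and $i=\td(G[T_{u_j}])$ guarantees at most $\lambda+\eta+1$ children per profile, so $b\le 2^{d+3\eta}\cdot\eta\cdot(\lambda+\eta+1)=\cO(1)$, and therefore $|C|\le b^{\eta}=\cO(1)$. Combining with (a) gives $|V(G)\setminus\cM|=\cO(k^3)+\cO(k^{3d+6\eta})\cdot\cO(1)=\cO(k^{3d+6\eta})$.

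The crux of the argument is step (b): one must verify that each child-subtree $T_{u_j}$ is actually among the components counted by Reduction Rule~\ref{rrule:single_component}, i.e., that $T_{u_j}$ is a connected component of $G-T'_{u_j}$ contained in $C$. The DFS property supplies connectedness, while the chosen $T'_{u_j}$ contains both the ancestors of $u_j$ in $Y_C$ (which separate $T_{u_j}$ from the rest of $C$ by the defining property of treedepth decompositions) and all non-$H$ external neighbors of $T_{u_j}$; this is precisely what is needed to isolate $T_{u_j}$ as a sub-component sitting inside $C$.
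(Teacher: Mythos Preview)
Your overall plan matches the paper's, but step (b) has a genuine gap: you attempt to bound $|C|$ for a component $C$ of $G-(X\cup Z)$ that still contains at least one unmarked vertex, yet exhaustive application of Reduction Rule~\ref{rrule:single_component} does \emph{not} bound the branching of a treedepth decomposition of such a $C$. The rule only \emph{moves} the surplus sub-components into $\cM$; those vertices remain in $C$. Concretely, if $C$ is a star whose $n$ leaves share a common profile, the rule marks all but $\lambda+\eta+1$ leaves, yet $|C|=n+1$ stays arbitrarily large. So the inference ``at most $\lambda+\eta+1$ children per profile, hence $b\le 2^{d+3\eta}\eta(\lambda+\eta+1)$ and $|C|\le b^{\eta}$'' is simply false when $C$ is taken in $G-(X\cup Z)$.

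The fix---and this is exactly what the paper does---is to run the branching argument for components $C$ of $G-(X\cup Z\cup\cM)$. Rule~\ref{rrule:single_component} is formulated for such $C$, and exhaustiveness then genuinely forbids any vertex of $C$ from having too many children in a depth-$\le\eta$ decomposition $Y_C$: otherwise the rule would still be applicable to that very $C$ with the vertex $v$, contradicting exhaustiveness. With this correction your argument in (b) becomes essentially the paper's. One further technicality in your last paragraph: you assert that $T_{u_j}$ is a connected component of $G-T'_{u_j}$ contained in $C$, but $T'_{u_j}=N_G(T_{u_j})\setminus H$ omits the $H$-neighbours of $T_{u_j}$, so in $G-T'_{u_j}$ the set $T_{u_j}$ may still be attached to $H$ and hence not sit inside $C$ as a full component. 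The paper avoids this by first deleting all of $N(C)\cup\uclos_{Y_C}(\{v\})$ (including the $H$-part) and only afterwards classifying the resulting pieces by $N(D)\setminus H$.
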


\begin{proof}
	Note that for every connected component \(C\) of \(G-(X\cup Z\cup \cM)\) we have that \(|N(C)_{G-\cM}\setminus H|\le d + 2\eta\). Moreover, \(|X\cup Z|=\bigoh(k^3)\). Hence there are at most \(\bigoh(k^{3(d+2\eta)})\) sets \(T\) such that \(T\subseteq (X\cup Z)\setminus H\) and \(\components{T}\neq \emptyset\). For a fixed \(T\), each connected component in \(\components{T}\) has treedepth at most \(\eta\). Therefore, if \(|\components{T}|\ge \eta(\lambda+\eta+1)+1\), then there is \(i\in [\eta]\) such that \(|\componentsTD{T}{i}|\ge \lambda+\eta+2\) and Reduction Rule~\ref{rrule:components} can be applied. Therefore,  \(|\components{T}|\le \eta(\lambda+\eta+1)\) and there are at most \(\bigoh( \eta(\lambda+\eta+1)k^{3(d+2\eta)}) = \bigoh(k^{3(d+2\eta)}) \) many connected components in \(G-(X\cup Z\cup \cM)\). It remains to show that each such component has a constant size. Let \(C\) be a connected component of \(G-(X\cup Z\cup \cM)\) and let \(Y_C\) be a treedepth decomposition for \(G[C]\) of depth at most \(\eta\). If all vertices of \(Y_C\) have at most \((2^{d+3\eta}\eta(\lambda+\eta+1))\) children, then the size of \(C\) is at most \((2^{d+3\eta}\eta(\lambda+\eta+1))^\eta\)  and the lemma follows. Otherwise, there is a vertex \(v\) in \(Y_C\) with at least \((2^{d+3\eta}\eta(\lambda+\eta+1))+1\) children. Now let \(v_1\) and \(v_2\) be two children of \(v\) and let \(u_1\) be any vertex in the subtree of \(Y_C\) rooted in \(v_1\) and \(u_2\) be any vertex in the subtree of \(Y_C\) rooted in \(v_2\). It is easy to see that \(v\) is the least common ancestor of \(u_1\) and \(u_2\) and it follows from the properties of a treedepth decomposition for a graph that \(N(C)\cup \uclos_{Y_C}\{v\}\) is a vertex separator between \(u_1\) and \(u_2\). Consecutively, there are at least \((2^{d+3\eta}\eta(\lambda+\eta+1))+1\) components \(D\) in \(G-(N(C)\cup \uclos_{Y_C}(\{v\})\) with \(D\subseteq C\). Now
	\(|N(C)_{G-\cM}\setminus H|\le d+2\eta\) and \(|\uclos_{Y_C}(\{v\})|\le \eta\) and it follows that there is \(T\subseteq (N(C)\setminus H)\cup \uclos_{Y_C}(\{v\})\) such that for at least \(\eta(\lambda+\eta+1)+1\) of these components we have \(N(D)\setminus H = T\). Moreover, each of these components have treedepth at most \(\eta\) and hence there are at least \(\lambda+\eta+2\) components with the same neighborhood and treedepth and Reduction Rule~\ref{rrule:single_component} can be applied. Since this is not possible, we conclude that the size of \(C\) is at most \((2^{d+3\eta}\eta(\lambda+\eta+1))^\eta\), i.e., a constant depending only on \(\delta\) and \(\eta\) and the lemma follows.
\end{proof}

Now our next goal is to add some of the vertices from \(\cM\) back, in order to preserve also an approximate {\nice} {\cbtds}. We start by setting \(\cN = \emptyset\). 
Now for every set \(L\subseteq V(G)\setminus \cM \) of size at most \(t = 2^{\lceil \frac{1}{\delta}\rceil}\) we compute a Steiner tree \(T_L\) for the set of terminals \(L\) in \(G\). If \(T_L\) has at most \((1+\delta)k\) vertices, we add all vertices on \(T_L\) to \(\cN\). It follows from Lemma~\ref{lemma:nice_solution_after_marking_size} that \(|\cN|=  \bigoh(k^{(3d+6\eta)t + 1})\). Since \(t\) is a constant, it follows from Proposition~\ref{proposition:steiner-tree-terminals} that we can compute each of at most $\bigoh(k^{(3d+6\eta)t})$ Steiner trees in polynomial time. We now let $G' = G-(\cM\setminus \cN)$.

The following lemma will be useful to show that there is a small \nice\ \cbtds\ solution in \(G'\). Moreover, it will be also useful in our solution-lifting algorithm, where we need to first transform the solution to a \nice\ \cbtds.

\begin{lemma}\label{lemma:constructing_nice_connected_solution}
	Let \(Y\subseteq (V(G)\setminus (X\cup Z))\) and let \(S\) be a \cbtds\ for \(G-Y\) of size at most \(k\). 
	There is a polynomial-time algorithm that takes on the input \(G\), \(Y\), and \(S\) and outputs 
	a \nice\ \cbtds\ for \(G-Y\) of size at most \((1+\delta)|S|\).
\end{lemma}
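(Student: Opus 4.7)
The plan is a greedy augmentation: start with $S_0 := S$ and, while some $T \subseteq (X\cup Z)\setminus H$ is \emph{violating} for $S_0$ (i.e.\ $|S_0 \cap \bigcup_{C \in \components{T}} C| > \lambda$ but $T \not\subseteq S_0$), augment $S_0$ by inserting $T \setminus S_0$ together with a small set of connector vertices chosen inside $G - Y$ so that $(G-Y)[S_0]$ remains connected. When no violation remains, $S_0$ is by construction a nice CBTDS of $G - Y$.

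To build the connector for a fixed violating $T$, observe that $|S_0 \cap \bigcup_{C \in \components{T}} C| > \lambda \ge 1$ forces some $C^\star \in \components{T}$ with $S_0 \cap C^\star \neq \emptyset$; fix some $s^\star \in S_0 \cap C^\star$. Since $T = N_G(C^\star) \setminus H$ and, by Proposition~\ref{lemma:bounded-treedepth-diameter}, $G[C^\star]$ has diameter at most $2^\eta$, every $v \in T$ is within distance $2^\eta + 1$ of $s^\star$ in $G$. I lift this to $G - Y$ as follows: for each $v \in T \setminus S_0$, let $D_v$ be the component of $(G - Y) - S_0$ containing $v$; since $D_v$ inherits treedepth at most $\eta$ (hence diameter at most $2^\eta$) and every edge leaving $D_v$ in $G - Y$ must land in $S_0$, there is a path of length at most $2^\eta + 1$ in $G - Y$ from $v$ to $S_0$, and I insert $v$ together with the at most $2^\eta$ interior vertices of that path into $S_0$.

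For the invariants, $S_0$ stays an $\eta$-treedepth deletion set of $G-Y$ because we only insert vertices (which can only decrease the treedepth of the complement), and $(G-Y)[S_0]$ stays connected because each inserted $v$ is attached via an inserted path to the previously connected $(G-Y)[S_0]$. After processing, every previously violating $T$ satisfies $T \subseteq S_0$, and insertions never create new violations: each inserted vertex lies either in $T \subseteq (X\cup Z)\setminus H$ or in $C^\star \in \components{T}$, and both are disjoint from $\bigcup_{C \in \components{T'}} C$ for every $T' \neq T$ by the pairwise disjointness of the sets $\components{T'}$ as components of $G[R]$.

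For the size bound, the sets $\bigcup_{C \in \components{T}} C$ are pairwise disjoint across distinct $T$, so at most $|S|/\lambda$ different $T$ can ever be violating. Each fix adds at most $|T|(2^\eta+1) \le (d+2\eta)(2^\eta+1)$ new vertices, so the cumulative increase is at most $|S|\cdot (d+2\eta)(2^\eta+1)/\lambda$, which is $O(\delta|S|)$ by the choice $\lambda = 2^\eta \ceil{(d+2\eta)/\delta}$, giving $|S_0| \le (1+\delta)|S|$. The main obstacle I anticipate is the short-path existence claim in $G - Y$: $G - Y$ need not be connected, so a priori some $v \in T \setminus S_0$ could lie in a component of $G - Y$ disjoint from $S_0$ (i.e.\ $D_v$ has no neighbor in $S_0$), and ruling this out requires using $|S_0 \cap \bigcup_{C \in \components{T}} C| > \lambda$ together with $v \in N_G(C)$ for every $C \in \components{T}$ via a careful counting argument forcing at least one non-$Y$ $C$-neighbor of $v$ to coincide with the $G-Y$-component of $S_0$.
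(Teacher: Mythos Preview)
Your overall strategy matches the paper's: augment $S$ by adding each violating $T$ together with short connectors, and bound the number of violating $T$'s by $|S|/\lambda$ using the pairwise disjointness of the sets $\bigcup_{C\in\components{T}}C$. The size arithmetic you give is the same as the paper's.

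The genuine gap is an internal inconsistency in your connector step that undermines the size bound. You route the connecting path for $v\in T\setminus S_0$ through $D_v$, the component of $(G-Y)-S_0$ containing $v$, but then assert that ``each inserted vertex lies either in $T$ or in $C^\star\in\components{T}$''. These are different objects: $D_v$ may contain vertices of $X\cup Z$ as well as vertices of components $C'\in\components{T'}$ with $T'\neq T$ (nothing prevents $v\in X\cup Z$ from reaching such a $C'$ inside $(G-Y)-S_0$), so the interior of your $D_v$-path need not lie in $C^\star$ at all. Once a connector vertex can land in some $\components{T'}$, your ``insertions never create new violations'' claim fails, and with it the $|S|/\lambda$ bound on the number of iterations and hence the $(1+\delta)|S|$ size bound. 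The paper avoids this pitfall by fixing the set $\{T_1,\dots,T_r\}$ of violations once (with respect to $S$, not iteratively) and, for each $T_i$, routing \emph{all} connectors inside a single component $C_i$ of $G-(X\cup Z\cup Y)$ that lies within some $C\in\components{T_i}$ and already meets $S$; every connector vertex then belongs to $\bigcup_{C\in\components{T_i}}C$, so no other $T'$ is affected and there is nothing to iterate. That same choice is how the paper addresses the connectivity obstacle you flag: it asserts $T_i\subseteq N(C_i)$, whence for each $u\in T_i$ the short $u$--$S$ path sits inside $C_i\cup\{u\}\subseteq V(G-Y)$. Your alternative ``counting argument'' for the $D_v$ route is not spelled out, and merely knowing $v\in N_G(C)$ for every $C\in\components{T}$ together with $|S_0\cap\bigcup C|>\lambda$ does not by itself rule out $Y$ containing every $G$-neighbour of $v$ in those components.
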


\begin{proof}
	Let \(\cT=\{T_1, \ldots, T_r\}\) be the collection of distinct vertex sets such that for all \(i\in [r]\), \(T_i\subseteq (X\cup Z)\setminus H\), and \(S\) intersects \(\components{T_i}\) in more than \(\lambda\) vertices.
	If for all \(i\in [r]\) we have that \(T_i\subseteq S\), then \(S\) is \nice. Hence, it suffices to add all sets \(T_i\) to \(S\) and make the final set connected without adding to the solution a vertex that is not in a component of \(\components{T_i}\) for some \(T_i\in \cT\). To get a {\nice} {\cbtds} for \(G-Y\), let us start with \(S'=S\).
	For every \(i\in [r]\) we do the following. We first add \(T_i\) to \(S'\). Now, let \(C_i\) be a connected component of \(G-(X\cup Z\cup Y)\) in \(\components{T_i}\) such that \(C_i\cap S\) is not empty. Since \(S\) intersects \(\components{T_i}\) in more than \(\lambda\) vertices, such a connected component \(C_i\) exists. Let \(v\) be an arbitrary vertex in \(C_i\cap S\). Since \(C_i\) is a connected component of \(G-(X\cup Z\cup Y)\) it follows that \(\td(G[C_i])\le \eta\) and the diameter of \(G[C_i]\) is at most \(2^{\eta}\) by Proposition~\ref{lemma:bounded-treedepth-diameter}. Moreover, \(T_i\subseteq N(C_i)\) and for every vertex \(u\) in \(T_i\) there is an \(u\)-\(v\) path in \(G\) of length at most \(2^{\eta}\). We add to \(S'\) a shortest path from every vertex of \(T_i\setminus S\) to the vertex \(v\). It is straightforward to verify that \(S'\) is a \nice\ \cbtds for \(G-Y\). It only remains to show that \(|S'|\le (1+\delta)|S|\). For each \(i\in [r]\) we added to \(S\) at most \(|T_i\setminus S|\cdot 2^{\eta}\) vertices. Moreover, \(|T_i\setminus S|\le |T_i|\le d + 2\eta\). Therefore, in total we added at most \((d+2\eta)\cdot 2^{\eta}\cdot r\) vertices to \(S\). Now for \(i\neq j\) we have that \(\components{T_i}\) and \(\components{T_j}\) are pairwise disjoint. Since for each \(i\in [r]\), \(\components{T_i}\) contain at least \(\lambda+1\) vertices of \(S\), it follows that \(r\le \frac{|S|}{\lambda+1}\). Hence \(|S'|-|S| \le (d+2\eta)\cdot 2^{\eta} \cdot\frac{|S|}{\lambda+1}\le \delta |S|\).
\end{proof}

Using the above lemma, we now prove the following two lemmas that we will eventually use to prove our final theorem statement.
The proofs of the following two lemmas use the correctness of Lemma~\ref{lemma:constructing_nice_connected_solution}.

\begin{lemma}\label{lemma:nice_connected_solution_in_reduced}
	If $\OPT(G,k)\le k$, then 
	there exists a {\cbtds} for $G'$ of size at most \((1+\delta)^2\OPT(G,k)\).
\end{lemma}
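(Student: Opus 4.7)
The plan is to construct a \cbtds{} for $G'$ of size at most $(1+\delta)^2\OPT(G,k)$ in three stages: convert an optimal \cbtds{} for $G$ into a nice \cbtds{} $S_1$ at a cost of one $(1+\delta)$ factor; observe that $S_1\cap V(G')$ already handles the $\eta$-treedepth obstructions in $G'$; and re-establish connectivity in $G'$ at a cost of a further $(1+\delta)$ factor using the precomputed Steiner trees in $\cN$.

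First I would fix an optimal \cbtds{} $S$ for $(G,k)$ of size $\OPT(G,k)\le k$ and apply Lemma~\ref{lemma:constructing_nice_connected_solution} with $Y=\emptyset$ to $S$, obtaining a nice \cbtds{} $S_1$ for $G$ with $|S_1|\le (1+\delta)\OPT(G,k)$. Next I would verify that $S_1\cap V(G')$ is an $\eta$-treedepth deletion set for $G'$: the graph $G'-(S_1\cap V(G'))$ is an induced subgraph of $G-S_1$ (since we are additionally deleting the vertices in $\cM\setminus\cN$), and since treedepth is monotone under induced subgraphs, every connected component has treedepth at most $\eta$. Hence it only remains to exhibit a set $S_2$ with $S_1\cap V(G')\subseteq S_2\subseteq V(G')$, $G'[S_2]$ connected, and $|S_2|\le (1+\delta)^2\OPT(G,k)$.

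For the connectivity step, I would exploit that $S_1$ is connected in $G$. A spanning tree $T$ of $G[S_1]$ under unit edge weights has weight $|S_1|-1$ and in particular is a Steiner tree in $G$ for the terminal set $L:=S_1\setminus\cM\subseteq V(G)\setminus\cM$. Applying Proposition~\ref{proposition:approx-k-steiner-tree} with $t=2^{\lceil 1/\delta\rceil}$ (so that $1/\lfloor\log_2 t\rfloor\le\delta$) yields a $t$-restricted Steiner tree $\cT$ in $G$ for $L$ of total weight at most $(1+\delta)(|S_1|-1)$. Each $t$-component $\tau$ of $\cT$ has at most $t$ leaves, all lying in $L\subseteq V(G)\setminus\cM$, so the minimum Steiner tree $T_{R'}$ computed in the construction of $\cN$ for the leaves $R'$ of $\tau$ satisfies $w(T_{R'})\le w(\tau)$. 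I would then argue that $|V(T_{R'})|\le (1+\delta)k$ so that $V(T_{R'})\subseteq\cN\subseteq V(G')$, and define $S_2$ as the union, over all $t$-components $\tau$, of $V(T_{R'})$ together with the remaining vertices $S_1\cap V(G')\setminus L = S_1\cap\cM\cap\cN$, attaching each such vertex to a nearby vertex of $L$ via a short path inside its original component of $G-(X\cup Z)$ (which has diameter at most $2^\eta$ by Proposition~\ref{lemma:bounded-treedepth-diameter}).

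The main obstacle will be verifying the per-component bound $|V(T_{R'})|\le (1+\delta)k$ required for inclusion in $\cN$: although the total weight of $\cT$ can a priori approach $(1+\delta)^2 k$, the bound must be justified by a per-$t$-component analysis, together with the handling of the at most $|S_1\cap\cM\cap\cN|$ extra terminals in $V(G')\setminus L$ that must still be incorporated into $S_2$ without destroying the $\eta$-treedepth property or connectivity. Summing the resulting sizes then gives $|S_2|\le (1+\delta)^2\OPT(G,k)$, as required.
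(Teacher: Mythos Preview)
Your overall strategy matches the paper's, but the place you flag as the ``main obstacle'' is not the real difficulty, and the part you treat as routine is where the proof actually breaks.

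The per-component bound $|V(T_{R'})|\le(1+\delta)k$ is immediate: $G[S_1]$ is connected with $|S_1|\le(1+\delta)\OPT(G,k)\le(1+\delta)k$, so $G[S_1]$ itself contains a Steiner tree for \emph{any} subset $R'\subseteq L=S_1\setminus\cM$ with at most $(1+\delta)k$ vertices. Hence the optimal $T_{R'}$ has at most $(1+\delta)k$ vertices and was placed in $\cN$ by construction. No per-$t$-component analysis is needed.

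The genuine gap is your treatment of the leftover vertices $S_1\cap\cM\cap\cN$. You propose to attach each such vertex to $L$ by a short path inside its component of $G-(X\cup Z)$, but nothing guarantees that such a path stays inside $G'$: the component may well contain vertices of $\cM\setminus\cN$, which have been deleted. Even when the path survives, the extra $2^\eta$ vertices per attachment are not covered by your $(1+\delta)^2$ budget. The clean way out---and what the paper does---is to drop these vertices entirely. Since $S_1$ is \emph{nice}, $S_1\setminus\cM$ is a nice \btds{} for $G-\cM$, and Lemma~\ref{lemma:nice_solution_after_marking} then gives that $S_1\setminus\cM$ is a \btds{} for all of $G$, hence for $G'$. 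So any connected superset of $S_1\setminus\cM$ inside $G'$ is already a \cbtds{} for $G'$; you never need $S_1\cap\cM\cap\cN$. Your weaker observation that $S_1\cap V(G')$ is a \btds{} for $G'$ is correct but insufficient, because it forces you to carry those extra vertices through the connectivity step.
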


\begin{proof}
	From Lemma~\ref{lemma:constructing_nice_connected_solution} applied on \(Y=\emptyset\) and an optimum solution \(S\), it follows that if \((G,k)\) is \Yes-instance, then there exists a \nice\ \cbtds\ \(S'\) for \(G\) of size at most \((1+\delta)\OPT(G,k)\). Now it is easy to see that \(S'\setminus \cM\) is a \nice\ \btds\ for \(G-\cM\) and by Lemma~\ref{lemma:nice_solution_after_marking} it is a \nice\ \btds\ for \(G\) as well. Now \(S'\) is a Steiner tree for \(S'\setminus \cM\) of size at most \((1+\delta)\OPT(G,k)\le (1+\delta)k\), hence the size of optimal Steiner tree for every subset \(L\subseteq S'\setminus \cM\) is also at most \((1+\delta)k\). Therefore, for every \(L\subseteq S'\setminus \cM\) of size at most \(t = 2^{\lceil \frac{1}{\delta}\rceil}\), \(G-(\cM\setminus\cN)\) contains an optimal Steiner tree \(T_L\) for the set of terminals \(L\) in \(G\). Hence, \(G-(\cM\setminus\cN)\) contains an optimal \(t\)-restricted Steiner tree \(\cT\) for \(S'\setminus \cM\). Clearly, the vertices of the \(t\)-restricted Steiner tree \(\cT\) for \(S'\setminus \cM\) induce a connected subgraph of \(G\) and contain all vertices in \(S'\setminus \cM\). Since \(S'\setminus \cM\) is a \nice\ \btds\ for \(G\), it follows that the vertices of \(\cT\) from a \cbtds. By Proposition~\ref{proposition:approx-k-steiner-tree}, the size of \(\cT\) is at most \((1+\frac{1}{\lfloor \log t \rfloor})|S'|\le (1+\delta)|S'|\le (1+\delta)^2\OPT(G,k)\).
\end{proof}

\begin{lemma}\label{lemma:solution_lifting}
	Given a {\cbtds} $S'$ of $(G', k')$ of size at most \(k\), we can in polynomial time compute a {\cbtds} \(S\) of $(G, k)$ such that $$\frac{|S|}{\OPT(G, k)} \le (1+\delta)^3 \frac{|S'|}{\OPT(G', k')}. $$
\end{lemma}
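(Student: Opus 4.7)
The plan is to construct \(S\) from \(S'\) in two steps. First, invoke Lemma~\ref{lemma:constructing_nice_connected_solution} with \(Y=\cM\setminus\cN\) on the \cbtds{} \(S'\) of \(G-Y=G'\), producing a \nice{} \cbtds{} \(S''\) of \(G'\) with \(|S''|\le(1+\delta)|S'|\). Second, simply output \(S:=S''\) and verify that this set is already a \cbtds{} of the original graph \(G\).

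To show that \(S=S''\) is a \cbtds{} of \(G\), connectivity of \(G[S]\) follows from connectivity of \(G'[S'']\) because \(G'\) is an induced subgraph of \(G\). For the treedepth requirement, consider \(S_r:=S''\setminus\cN\); since \(S''\subseteq V(G')=(V(G)\setminus\cM)\cup\cN\), \(S_r\) lies in \(V(G)\setminus\cM\). Then \((G-\cM)-S_r\) is an induced subgraph of \(G'-S''\), so \(S_r\) is an \(\eta\)-treedepth deletion set of \(G-\cM\). Moreover, the \nice{} condition transfers from \(S''\) to \(S_r\): for every \(T\subseteq(X\cup Z)\setminus H\), \(|S_r\cap\bigcup\components{T}|\le|S''\cap\bigcup\components{T}|\) because \(S_r\subseteq S''\); if this quantity exceeds \(\lambda\), niceness of \(S''\) gives \(T\subseteq S''\), and since \(T\cap\cM=\emptyset\) (as \(T\subseteq X\cup Z\) while \(\cM\subseteq V(G)\setminus(X\cup Z)\) by the construction via Reduction Rules~\ref{rrule:components}~and~\ref{rrule:single_component}), we conclude \(T\subseteq S''\setminus\cN=S_r\). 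Thus \(S_r\) is a \nice{} \btds{} of \(G-\cM\), and Lemma~\ref{lemma:nice_solution_after_marking} promotes it to a \btds{} of \(G\). Because \(S_r\subseteq S\), the graph \(G-S\) is an induced subgraph of \(G-S_r\) and therefore also has treedepth at most \(\eta\).

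For the approximation ratio, when \(\OPT(G,k)\le k\), Lemma~\ref{lemma:nice_connected_solution_in_reduced} gives \(\OPT(G',k')\le(1+\delta)^2\OPT(G,k)\); combining with \(|S|\le(1+\delta)|S'|\) yields
\[
\frac{|S|}{\OPT(G,k)}\le\frac{(1+\delta)|S'|}{\OPT(G,k)}\le(1+\delta)^3\frac{|S'|}{\OPT(G',k')},
\]
as required. The corner case \(\OPT(G,k)=k+1\) (the cap in the parameterized optimization value \(\CBT\)) is immediate because \(\CBT(G,k,S)\le k+1=\OPT(G,k)\) makes the left-hand ratio at most \(1\) while the right-hand ratio is always at least \(1\). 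I expect the main subtlety to lie in the nice-property transfer from \(S''\) in \(G'\) down to \(S_r\) in \(G-\cM\) and the resulting application of Lemma~\ref{lemma:nice_solution_after_marking}, which together bridge between the reduced graph \(G'\) and the original graph \(G\) at no further approximation cost; once this is in place, the size bound from Step~1 and Lemma~\ref{lemma:nice_connected_solution_in_reduced} combine as displayed.
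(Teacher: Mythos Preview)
Your approach is the paper's: apply Lemma~\ref{lemma:constructing_nice_connected_solution} to turn \(S'\) into a \nice\ \cbtds\ \(S''\) of \(G'\) with \(|S''|\le(1+\delta)|S'|\), use Lemma~\ref{lemma:nice_solution_after_marking} to conclude that \(S''\) is already a \cbtds\ of \(G\), and combine with Lemma~\ref{lemma:nice_connected_solution_in_reduced} for the ratio. The paper compresses the middle step into one sentence; your write-up is more explicit, which is fine.

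There is, however, one small slip in your bridging step. You set \(S_r:=S''\setminus\cN\) and assert that \((G-\cM)-S_r\) is an induced subgraph of \(G'-S''\). This need not hold: \(\cN\) is the union of the Steiner trees \(T_L\) with terminal sets \(L\subseteq V(G)\setminus\cM\), so \(\cN\) contains plenty of vertices of \(V(G)\setminus\cM\) (the terminals themselves, for instance). Any vertex \(v\in(V(G)\setminus\cM)\cap\cN\cap S''\) then satisfies \(v\notin S_r\) and \(v\notin\cM\), hence \(v\in(G-\cM)-S_r\), yet \(v\in S''\) means \(v\notin G'-S''\). So the induced-subgraph claim fails and you cannot conclude that \(S_r\) is an \btds\ of \(G-\cM\). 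The fix is immediate: take \(S_r:=S''\setminus\cM\) (equivalently \(S''\cap V(G-\cM)\)). Then \((G-\cM)-S_r=(G-\cM)-S''\) really is an induced subgraph of \(G'-S''\), your niceness transfer goes through verbatim (since any \(T\subseteq(X\cup Z)\setminus H\) is disjoint from \(\cM\), so \(T\subseteq S''\) implies \(T\subseteq S''\setminus\cM\)), and Lemma~\ref{lemma:nice_solution_after_marking} applies as you intended. With this one-symbol change the argument is complete and matches the paper's.
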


\begin{proof}
	By Lemma~\ref{lemma:constructing_nice_connected_solution}, we can in polynomial time construct a \nice\ \cbtds\ \(S\) for \(G'\) such that \(|S|\le (1+\delta)|S'|\). Since \(V(G)\setminus \cM \subseteq V(G')\), it follows from Lemma~\ref{lemma:nice_solution_after_marking} that \(S\) is also a \nice\ \cbtds\ for \(G\). From Lemma~\ref{lemma:nice_connected_solution_in_reduced}, it follows that 
	$$\frac{\OPT(G',k')}{\OPT(G,k)}\le (1+\delta)^2$$
	
	Combining the two inequalities, we get that $$\frac{|S|}{\OPT(G, k)} \le (1+\delta)^3 \frac{|S'|}{\OPT(G', k')}$$
\end{proof}

We are now ready to prove our main result.


\thmpsaksCBTDSmain*

\begin{proof}
{\color{black}
We choose $\delta, \lambda$ and $t$ as described earlier. 
}
The approximate kernelization algorithm has two parts, i.e. reduction algorithm and solution-lifting algorithm.
Let $(G, k)$ be an input instance of {\CBTDS}.

{\color{blue} \ding{226}} 
{\bf Reduction Algorithm:} If $G$ has two distinct components both having treedepth at least $\eta + 1$, we output $(K_{\eta + 2} \uplus K_{\eta + 2}, 1)$.
{\color{black} If $|V(G)| \leq 2^{3\eta^2 + d\eta}(\lambda + \eta + 1)^{\eta + 1}(1+\delta)k^{(3d + 6\eta)t + 1}$, then we output $(G, k)$.}
Otherwise, we do the following.
\begin{itemize}
	\item We first apply Reduction Rule~\ref{red-rule:redundant-component-removal} to remove all connected components of $G$ with treedepth at most $\eta$.
	\item Then, we invoke Lemma~\ref{lemma:decomposition-of-G} to compute a decomposition of $G$ such that $V(G) = X \cup Z \cup R$ and the conditions are satisfied.
	\item Then, we apply Reduction Rule~\ref{red-rule:forcing-neighborhood-of-C} exhaustively to construct $H$ and the output instance is $(G_1, k_1)$ with $k_1 = k$.
	\item We use Reduction Rules~\ref{rrule:components} and~\ref{rrule:single_component} on $(G_1, k_1)$ exhaustively to compute $\cM$. 
	
	\item Afterwards, we compute an optimal Steiner tree \(T_L\) for every subset \(L\) of \(V(G)\setminus \cM\) of size at most \(t= 2^{\ceil{\frac{1}{\delta}}}\) and if its size is at most \((1+\delta)k\), then we add \(T_L\) to the set $\cN$. 
	\item We delete $\cM \setminus \cN$ from $(G_1, k_1)$ to compute the instance $(G', k')$ with $k' = k_1$.
	\item Output $(G', k')$.
\end{itemize}

{\color{blue} \ding{226}}
{\bf Solution-lifting Algorithm:} Let $S'$ be a {\cbtds} of $(G', k')$.
Recall that $H \subseteq S'$.
If $|S'| > k'$, then we output the entire vertex set of a connected component of $G$ whose treedepth is larger than $\eta$.
If $|S'| \leq k'$, then $\CBT (G', k') = |S'|$.
We invoke Lemma~\ref{lemma:solution_lifting} to compute a nice {\cbtds} $S_1$ of the instance $(G_1, k_1)$ such that
 
\begin{equation}
\centering
\frac{|S_1|}{\OPT(G_1, k_1)} \leq (1+\delta)^3 \frac{\CBT (G', k', S')}{\OPT(G', k')}
\label{eq:final-proof-eq-1}	
\end{equation}

By construction, $H \subseteq S_1$.
If $|S_1| \leq k_1$, we output $S = S_1$ as a {\cbtds} of $(G, k)$.
If $|S_1| > k_1$, we output the entire vertex set of a connected component whose treedepth is larger than $\eta$.
We are giving the proof for $|S| \leq k$. The other case can be proved in a similar way.

As $k \geq |S| \geq \OPT(G, k)$, it follows from Lemma~\ref{lemma:alpha-safeness-red-rule-forcing-neighborhood-of-C} that $\OPT (G_1, k_1) \leq (1+\delta) \OPT(G, k)$.
Moreover, $\CBT(G, k, S) = |S|$.
Hence, using (\ref{eq:final-proof-eq-1}) we have the following.

	$$\frac{\CBT (G, k, S)}{\OPT (G, k)} \leq (1+\delta) \frac{\CBT (G_1, k_1, S_1)}{\OPT(G_1, k_1)}$$
	$$ = (1+\delta) \frac{|S_1|}{\OPT(G_1, k_1)} \leq (1+\delta)^4 \frac{\CBT (G', k', S')}{\OPT(G', k')}$$
	
As $(1+\delta)^4 \leq (1+\eps)$, a $c$-approximate {\cbtds} of $(G', k')$ can be lifted to a $c(1+\eps)$-approximate {\cbtds} of $(G, k)$.

On the other hand, if $|S'| > k$, then we output the entire connected component $C$ of $V(G)$ with treedepth at least $\eta + 1$.
Observe that $\CBT (G, k, C) = k + 1$ and $k = k_1 = k'$.
Moreover, $\CBT (G', k', S') = k' + 1$.
Here are the following cases.
\begin{description}
	\item[Case 1:] Let $\OPT (G, k) = k+1$. Then,
	
	$$\frac{\CBT (G, k, C)}{\OPT(G, k)} = \frac{k+1}{\OPT(G, k)} = 1 \leq (1+\eps) \frac{\OPT(G', k', S')}{\OPT(G',k')}$$
	
	\item[Case 2:]
Let $\OPT(G, k) \leq k$. Then $\OPT(G_1, k_1) \leq (1+\delta)\OPT(G, k)$.
Furthermore, if $\OPT(G_1, k_1) \leq k_1$, then $\OPT(G', k') \leq (1+\delta)^3 \OPT(G_1, k_1)$.
Therefore, $\OPT(G', k') \leq (1+\delta)^4 \OPT(G, k) \leq (1+\eps)\OPT(G, k)$ and we have the following.

$$\frac{\CBT (G, k, C)}{\OPT(G, k)} = \frac{k+1}{\OPT(G, k)} \leq (1+\eps) \frac{\CBT (G', k', S')}{\OPT(G', k')}$$

Otherwise, let $\OPT(G_1, k_1) = k+1$.
Then, for any (nice) {\cbtds} $S_1$ of $(G_1, k_1)$, it holds that $$\frac{\CBT (G, k, C)}{\OPT(G, k)} \leq (1+\delta)\frac{\CBT (G_1, k_1, S_1)}{\OPT(G_1, k_1)}$$
$$ = (1+\delta) \leq (1+\eps)\frac{\CBT (G', k', S')}{\OPT(G', k')}$$
\end{description}

By construction, Lemma~\ref{lemma:nice_solution_after_marking_size}, and the size bound of $|\cN|$, we have that $|V(G')| = \cO(k^{(3d + 6\eta) t + 1})$.
Recall that $d = \ceil{2^{\eta + 3}\eta/\delta}$, $t = 2^{\ceil{1/\delta}}$, and \(\delta = \frac{\eps}{10}\).
{\color{black} Observe that all the reduction rules can be performed in $k^{\cO(3d + 6\eta)t} n^{\cO(1)}$-time. As these reduction rules are executed only when $|V(G)| = n > 2^{3\eta^2 + d\eta}(\lambda + \eta + 1)^{\eta + 1}(1+\delta)k^{(3d + 6\eta)t + 1}$, this PSAKS is time efficient}.
The reduction algorithm and the solution-lifting algorithm run in polynomial time.
Therefore, we have a time-efficient $(1+\eps)$-approximate kernel with the claimed bound.
\end{proof}

\section{Conclusions}
We have obtained a polynomial-size approximate kernelization scheme (PSAKS) for {\CBTDS}, improving upon existing bounds and advancing the line of work on approximate kernels for vertex deletion problems with connectivity constraints.  Towards our result, we combined known decomposition techniques with new preprocessing steps that exploit structure present in bounded treedepth graphs. 
%
Our work points to a few interesting questions for follow up research:

\begin{enumerate}
\item Is there a PSAKS for {\sc $\eta$-Treedepth Deletion} with stronger connectivity constraints, e.g., when the solution is required to induce a biconnected graph or a $2$-edge-connected graph. Recently, Einarson et al.~\cite{EGJMW20arxiv} initiated this line of research in the context of studying approximate kernels for {\sc Vertex Cover} with stronger connectivity constraints. Could a similar result be obtained for {\sc $\eta$-Treedepth Deletion} with stronger connectivity requirements?
\item Would it be possible to improve the size of our PSAKS to $f(\eta)\cdot k^{\bigoh(1/\eps)}$, i.e. the exponent of $k$ becomes independent of $\eta$? It is known that {\sc $\eta$-Treedepth Deletion} problem (without connectivity constraints) admits a kernel with $\cO(2^{\cO(\eta^2)}k^6)$ vertices~\cite{GiannopoulouJLS17}. In fact, parts of our algorithm are based on this work of Giannopoulou et al. However,  we incur the $k^{\bigoh(2^{\bigoh(\eta)}\cdot 1/\eps)}$ cost in the size of our kernel in several places (e.g. Reduction Rules~\ref{rrule:components} and \ref{rrule:single_component}).
We believe that one would need to formulate a significantly distinct approach in order to attain a bound of $f(\eta)\cdot k^{\bigoh(1/\eps)}$.
Obtaining such a uniform PSAKS for {\CBTDS} is an interesting open problem.
\item Could one get a PSAKS for {\sc Connected $\eta$-Treewidth Deletion}? The current best approximate kernelization result for this problem is the $(2+\eps)$-approximate polynomial compression from \cite{Ramanujan21}.
We believe that several parts of our algorithm can be adapted to work for {\sc $\eta$-Treewidth Deletion}. However, we have crucially used the fact that a connected bounded treedepth graph has bounded diameter, which is a property one cannot assume for bounded treewidth graphs. 

\end{enumerate}

\mypara{Acknowledgement:} Research of M. S. Ramanujan has been supported by Engineering and Physical Sciences Research Council (EPSRC) grants EP/V007793/1 and EP/V044621/1.



\end{document}